\begin{document}
\title{Reconfiguring Multiple Connected Components \\with Size Multiset Constraints}

\titlerunning{Reconfiguring Multiple Connected Components}
%
\author{Yu Nakahata\inst{1}\orcidID{0000-0002-8947-0994}}
%
%
\institute{Nara Institute of Science and Technology,\\ 
8916-5 Takayama-cho, Ikoma, Nara 630-0192, Japan\\
\email{yu.nakahata@is.naist.jp}}
%
\maketitle              

\newcommand{\ISR}{\textsc{ISR}}
\newcommand{\CCR}{\textsc{CCR}}
\newcommand{\IMR}{\textsc{IMR}}

\newcommand{\TJ}{\textsf{TJ}}
\newcommand{\TS}{\textsf{TS}}
\newcommand{\CJ}{\textsf{CJ}}
\newcommand{\CS}{\textsf{CS}}
\newcommand{\CSone}{\textsf{CS1}}
\newcommand{\R}{\textsf{R}}

\newcommand{\compl}[1]{\overline{#1}}

\newcommand{\C}[1]{\mathcal{C}(#1)}
\renewcommand{\S}{\mathcal{S}}
\newcommand{\M}{\mathcal{M}}

\newcommand{\adj}[1]{\overset{#1}{\longleftrightarrow}}
\newcommand{\reach}[1]{\overset{#1}{\leftrightsquigarrow}}
\newcommand{\seq}[1]{\langle #1 \rangle}
\newcommand{\yes}{\longrightarrow}
\newcommand{\dist}[3]{d_{#1}(#2, #3)}

\newcommand{\set}[1]{\{#1\}}
\newcommand{\size}[1]{|#1|}
\newcommand{\weight}[1]{\size{\size{#1}}}
\newcommand{\edge}[1]{\set{#1}}
\newcommand{\ms}[1]{\{\!|#1|\!\}}

\newcommand{\order}[1]{\mathcal{O}(#1)}
\newcommand{\YES}{\textrm{YES}}
\newcommand{\NO}{\textrm{NO}}

\newcommand{\inv}[2]{\mathrm{inv}(#1, #2)}
\newcommand{\rank}[2]{\sigma_{#1}(#2)}

\newcommand{\piran}[2]{\Pi(#1, #2)}
\newcommand{\ccpiran}[2]{\Pi^{\mathtt{cc}}(#1, #2)}
\newcommand{\vcc}{V^{\mathtt{cc}}}
\newcommand{\ecc}{E^{\mathtt{cc}}}

\begin{abstract}
We propose a novel generalization of \textsc{Independent Set Reconfiguration} (\ISR): \textsc{Connected Components Reconfiguration} (\CCR).
In \CCR, we are given a graph $G$, two vertex subsets $A$ and $B$, and a multiset $\M$ of positive integers.
The question is whether $A$ and $B$ are reconfigurable under a certain rule, while ensuring that each vertex subset induces connected components whose sizes match the multiset $\mathcal{M}$.
\ISR\ is a special case of \CCR\ where $\M$ only contains 1.
We also propose new reconfiguration rules: \emph{component jumping} (\CJ) and \emph{component sliding} (\CS), which regard \emph{connected components as tokens}.
Since \CCR\ generalizes \ISR, the problem is PSPACE-complete.
In contrast, we show three positive results:
First, \CCR-\CS\ and \CCR-\CJ\ are solvable in linear and quadratic time, respectively, when $G$ is a path.
Second, we show that \CCR-\CS\ is solvable in linear time for cographs.
Third, when $\M$ contains only the same elements (i.e., all connected components have the same size), we show that \CCR-\CJ\ is solvable in linear time if $G$ is chordal.
The second and third results generalize known results for \ISR\ and exhibit an interesting difference between the reconfiguration rules.

\keywords{Combinatorial reconfiguration \and Graph algorithm \and Connected component \and Cograph \and Chordal graph}
\end{abstract}
\section{Introduction}
Imagine robots operating in a disaster area.
Now that the work is done, the robots are set to be reassigned to their new positions.
For safety reasons, we need to ensure that robots are not in the same area or adjacent to each other.
How should we move the robots to change their positions from the current arrangement to the target arrangement?
We assume the robots to be indistinguishable.

This situation can be modeled using \emph{combinatorial reconfiguration}~\cite{ito2011complexity}.
The aforementioned problem is \textsc{Independent Set Reconfiguration} (\ISR)~\cite{hearn2005pspace,ito2011complexity,kaminski2012complexity},  the most well-studied problem in the field.
In \ISR, we are given a graph $G$ and two vertex subsets, $A$ and $B$.
Imagine that a \emph{token} is placed on each vertex in $A$.
Can we move the tokens one by one to eventually lead to $B$ while keeping the tokens not in the same vertex or adjacent vertices?
In the above example, a robot corresponds to a token.

Let us generalize the above example to the following situation: Each robot has \emph{size} $k$ and occupies exactly $k$ vertices.
They can flexibly change their shapes like amoebas.
However, the $k$ vertices must be connected.
Additionally, robots should not occupy the same vertex or adjacent vertices for safety reasons.
How can we move the robots from the initial configuration to the target configuration?
We assume that robots of the same size are indistinguishable from each other.

To model the above problem, we define a new reconfiguration problem, \textsc{Connected Components Reconfiguration} (\CCR).
In \CCR, we are given a graph $G$, two vertex subsets $A$ and $B$, and a multiset $\M$ of positive integers.
The question is whether $A$ and $B$ are reconfigurable under some rule while keeping each vertex subset induces connected components whose sizes are equal to $\M$.
In the above example, $A$ and $B$ are the vertices occupied by robots in the initial and target configuration, respectively.
$\M$ describes the sizes of robots.
\ISR\ is a special case of \CCR\ where $\M$ only contains 1.

\emph{Reconfiguration rules} define the valid movement of tokens in reconfiguration problems.
The most well-studied reconfiguration rules are \emph{token jumping} (\TJ)~\cite{kaminski2012complexity} and \emph{token sliding} (\TS)~\cite{hearn2005pspace}.
\TS\ allows us to move a token to an adjacent vertex where no other token exists.
Additionally, \TJ\ enables us to move a token to a non-adjacent vertex if no other token exists at that vertex.
In the first example, moving a robot on the ground along an edge corresponds to \TS, while \TJ\ indicates that a robot can move to other distant vertices like a drone.

\begin{figure}[t]
    \centering
    \includegraphics[bb=0 0 420 217,width=0.8\linewidth]{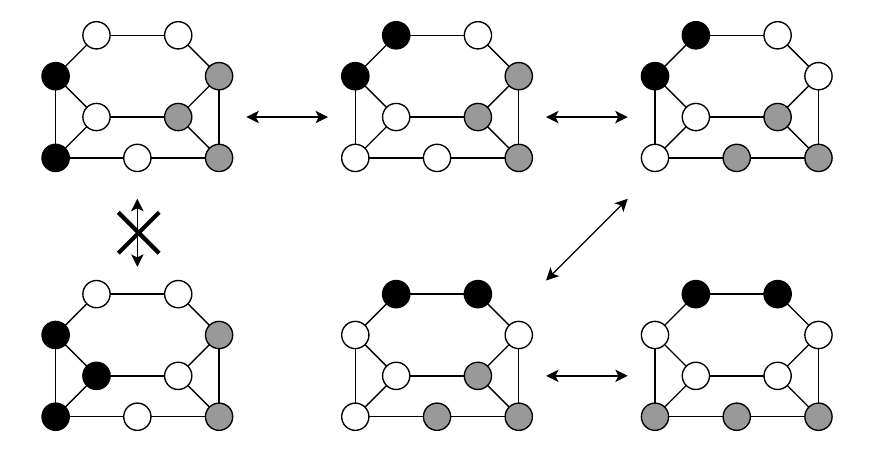}
    \caption{A reconfiguration sequence in \CCR-\CJ. The start and target configurations $A$ and $B$ are shown in the upper left and lower right, respectively.
    $\M$ consists of one 2 and one 3.
    Black and gray vertices are in vertex subsets.
    Note that the upper left configuration and the lower left configuration are not adjacent under \CJ\ because they exchange vertices between different connected components, which is allowed in \TJ\ and \TS.
    The reconfiguration sequence is also valid under \CS.}
    \label{fig:example}
\end{figure}

However, applying \TJ\ and \TS\ to \CCR\ causes an issue.
In \TJ\ and \TS, a token is a vertex.
Therefore, tokens are allowed to move between different robots, as illustrated on the left of \Cref{fig:example}.
We have assumed that the robots are flexible enough to change their shapes but not so flexible that they can self-repair.
Therefore, we need rules that make robots move while maintaining their sizes.
To this end, we propose two new reconfiguration rules: \emph{component jumping} (\CJ) and \emph{component sliding} (\CS), where \emph{a token is a connected component}.
When all the connected components have size 1, \CCR-\CJ\ and \CCR-\CS\ coincide with \ISR-\TJ\ and \ISR-\TS, respectively.\footnote{Here, \textsc{Prob}-\R\ indicates the problem \textsc{Prob} under the reconfiguration rule \R.}
An example of a reconfiguration sequence under \CJ\ and \CS\ is shown in \Cref{fig:example}.

Since \CCR-\CJ\ and \CCR-\CS\ generalize \ISR-\TJ\ and \ISR-\TS, respectively, they are PSPACE-complete in general.
In contrast, we show three positive results:
First, \CCR-\CS\ and \CCR-\CJ\ are solvable in linear and quadratic time, respectively, when $G$ is a path.
The former is easy, but the algorithm for the latter is non-trivial and has a connection to the sorting problem with a buffer.
Second, we show that \CCR-\CS\ is solvable in linear time for cographs.
Third, when $\M$ contains only the same elements (i.e., all connected components have the same size), we show that \CCR-\CJ\ is solvable in linear time if $G$ is chordal.
The second and third results generalize known results for \ISR\ and exhibit an interesting difference between the reconfiguration rules.

Proofs of the theorems and lemmas marked with an asterisk (*) are shown in the Appendix.

\paragraph{Related work.}
Combinatorial reconfiguration~\cite{ito2011complexity} is an emerging area in theoretical computer science.
It is important from both practical and theoretical perspectives because many real-world systems are dynamic; additionally, it reveals the structure of the solution space.
Combinatorial reconfiguration is studied for several problems, including independent sets~\cite{hearn2005pspace,ito2011complexity,kaminski2012complexity},
dominating sets~\cite{haddadan2016complexity,suzuki2016reconfiguration},
and cliques~\cite{ito2023reconfiguration}.
See the survey for more details~\cite{nishimura2018introduction}.

Since \ISR\ is the most well-studied problem in combinatorial reconfiguration, problems that generalize \ISR\ have also been studied, such as \textsc{Regular Induced Subgraph Reconfiguration}~\cite{eto2022reconfiguration} and \textsc{Induced Isomorphic Subgraph Reconfiguration}~\cite{suga2025changing}.
Some studies generalize reconfiguration rules.
As generalizations of \TJ\ and \TS, Hatano et al.~\cite{hatano2025independent} proposed $d$-Jump, where one token can be moved to within a distance $d$, Suga et al.~\cite{suga2025changing} proposed $k$-\TJ\ and $k$-\TS, where at most $k$ tokens can move simultaneously, and, K{\v{r}}i{\v{s}}t'an and Svoboda~\cite{kvrivstan2025reconfiguration} proposed $(k, d)$-\TJ, where at most $k$ tokens can each move within a distance $d$.
Note that, however, a token is a vertex in these papers.
In this paper, we extend not only the problem (\ISR\ to \CCR) but also the token (a vertex to a connected component) and propose new reconfiguration rules \CJ\ and \CS.

\section{Preliminaries}\label{sec:pre}
In this paper, we consider a simple undirected graph $G = (V, E)$.
Throughout this paper, we denote the number of vertices by $n$.
For $U \subseteq V$, we define $N[U] = U \cup \set{v \in V : \exists u \in U, \edge{u, v} \in E}$, and $N(U) = N[U] \setminus U$.
For $U, U' \subseteq V$, \emph{$U$ touches $U'$} if $U \cup U'$ is connected.
Note that $U$ touches $U'$ if and only if $U'$ touches $U$.
For $u \in V$ and $U \subseteq V$, \emph{$u$ touches $U$} if $\set{u} \cup U$ is connected.
Let $G[U]$ be an induced subgraph by $U$.
When $G[U]$ has no edges, $U$ is an \emph{independent set}.
We say that $U$ is \emph{connected} if $G[U]$ is connected.
A subset of $U$ that is connected and maximal is a \emph{connected component} of $U$.
The \emph{size} of a connected component is the number of vertices in it.
Let $\C{U}$ be the set of connected components of $U$; that is, $\C{U}$ is a partition of $U$ and contains no empty set.
Let $m(U)$ be a multiset of sizes of connected components of $U$.
We refer to $m(U)$ as the \emph{CC-multiset} of $U$ in $G$.

For graphs $G$ and $H$, if $H$ is an induced subgraph of $G$, we say that \emph{$G$ contains $H$ as an induced subgraph}.
If $G$ does not contain $H$ as an induced subgraph, $G$ is \emph{$H$-free}.
For instance, \emph{cographs} are $P_4$-free graphs, where $P_4$ denotes a path with four vertices.
A graph $G$ is \emph{chordal} if $G$ is $C_\ell$-free for every $\ell \ge 4$, where $C_\ell$ denotes a cycle with $\ell$ vertices.
$G$ is \emph{even-hole-free} if $G$ is $C_\ell$-free for every even $\ell \ge 4$.
By definition, chordal graphs are even-hole-free.

\section{Our problem and reconfiguration rules}\label{sec:def}
In this section, we introduce \textsc{Connected Components Reconfiguration} (\CCR) and propose two new reconfiguration rules: \emph{component jumping} (\CJ) and \emph{component sliding} (\CS).

\begin{definition}[\CCR]\label{def:ccr}
    \textsc{Connected Components Reconfiguration} (\CCR) is defined as follows:
    \begin{description}
        \item[Input] A graph $G$, vertex subsets $A, B \subseteq V$, a multiset $\M$ consisting of positive integers, and a reconfiguration rule $\R$.
        \item[Output] Is there a sequence of vertex subsets from $A$ to $B$ where (1) every vertex subset has a CC-multiset equal to $\M$ and (2) every two consecutive vertex subsets are adjacent in \R?
    \end{description}
\end{definition}
If the answer is \YES, we say that \emph{$A$ and $B$ are reconfigurable} and call the sequence satisfying (1) and (2) a \emph{reconfiguration sequence from $A$ to $B$}.
The \emph{length} of a reconfiguration sequence $\seq{U_0 = A, U_1, \dots, U_{\ell} = B}$ is $\ell$.

Using the multiset $\M$, we can express the solution spaces of several problems: If $\M$ only contains 1, the solution space is the independent sets.
If $\M$ only contains 2, the solution space is the induced matchings.
In this way, \CCR\ generalizes the existing reconfiguration problems.

As for the reconfiguration rule $\R$, \TJ\ and \TS\ are the most well-studied ones in the literature~\cite{hearn2005pspace,ito2011complexity,kaminski2012complexity}.
A reconfiguration rule defines the adjacency relation between solutions.
The adjacency of $U, U' \subseteq V$ under a reconfiguration rule $\R$ is written as $U \adj{\R} U'$.
Then, \TJ\ and \TS\ are defined as follows.
\begin{description}
    \item[\TJ] $U \adj{\TJ} U' \Leftrightarrow |U \setminus U'| = |U' \setminus U| = 1$
    \item[\TS] $U \adj{\TS} U' \Leftrightarrow |U \setminus U'| = |U' \setminus U| = 1$ and $\{u, u'\} \in E(G)$, where $U \setminus U' = \set{u}, U' \setminus U = \set{u'}$
\end{description}
\TJ\ and \TS\ are often explained using the notion of \emph{token}.
Imagine that a token is placed on each vertex in $A$.
In each step, we can move one token to an unoccupied vertex; this is \TJ, and in \TS, we must move a token to its neighbor.
The reconfiguration problem asks: Can we move tokens from $A$ to $B$ under a reconfiguration rule while keeping the token configuration \emph{valid} (e.g., in \ISR, tokens are not adjacent)?

In addition to \TJ\ and \TS, we define two new reconfiguration rules: \emph{component jumping} (\CJ) and \emph
{component sliding} (\CS).
We also define a special case of \CS, \CSone, which only allows component sliding by one vertex.
In \TJ\ and \TS, a token is a vertex.
In contrast, in \CJ\ and \CS, \emph{a ``token'' is a connected component}.
In the following, if $|\C{U} \setminus \C{U'}| = |\C{U'} \setminus \C{U}| = 1$, we denote the only connected components in $\C{U} \setminus \C{U'}$ and $\C{U'} \setminus \C{U}$ by $C$ and $C'$, respectively.

\begin{definition}[\CJ, \CS, \CSone]\label{def:rule}
    For $U, U' \subseteq V(G)$ with $m(U) = m(U')$, the reconfiguration rules \CJ, \CS, and \CSone\ are defined as follows:
    \begin{description}
        \item[\CJ] $U \adj{\CJ} U' \Leftrightarrow |\C{U} \setminus \C{U'}| = |\C{U'} \setminus \C{U}| = 1$
        \item[\CS] $U \adj{\CS} U' \Leftrightarrow U \adj{\CJ} U'$ and $C \cup C'$ is connected
        \item[\CSone] $U \adj{\CSone} U' \Leftrightarrow U \adj{\CS} U'$ and $\size{C \setminus C'} = \size{C' \setminus C} = 1$
            \item 
    \end{description}
\end{definition}
\CJ\ and \CS\ are defined so that they generalize \TS\ and \TJ\ in \ISR, respectively.
Here, a token is individual but can change its body like an amoeba.
Each token has its \emph{size}, and a token with size $k$ occupies exactly $k$ vertices.
In addition, vertices occupied by a token must be connected.
In each step, \CJ\ allows us to move one token (connected component) with size $k$ to any vertex subset with $k$ vertices and does not touch the other connected components.
In \CS, in addition, the union of vertices occupied by the connected component before and after the movement must be connected.

Let us consider the relation between reconfiguration rules in \CCR.
We write $\R \yes \R'$ when: if the answer of \CCR-\R\ is \YES, the answer of \CCR-$\R'$ is also \YES.
The following theorem shows the relation between reconfiguration rules.
The summary is shown in \Cref{fig:relation}.

\begin{figure}[t]
    \centering
    \includegraphics[bb=0 0 202 109,width=0.5\linewidth]
    {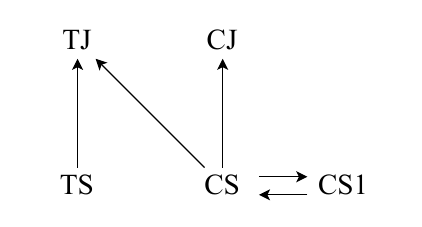}
    \caption{Relation between the reconfiguration rules. A solid arrow $\R \yes \R'$ means that, if the answer of \CCR-\R\ is \YES, the answer of \CCR-$\R'$ is also \YES.}
    \label{fig:relation}
\end{figure}

\begin{theorem}[*]\label{thm:relation}
    For a graph $G$ and two vertex subsets $U, U' \subseteq V$ with $m(U) = m(U')$, the following holds:
    \begin{enumerate}
        \item[(1)] $\CSone \yes \CS$ and $\CS \yes \CSone$,
        \item[(2)] $\TS \yes \TJ,\CS \yes \CJ$, and $\CS \yes \TJ$, 
    \end{enumerate}
\end{theorem}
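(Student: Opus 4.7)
Three of the six claimed implications are immediate from the definitions. Specifically, $\CSone \yes \CS$ holds because every $\CSone$-move is by definition a $\CS$-move; $\TS \yes \TJ$ holds because the adjacency relation of $\TS$ is strictly contained in that of $\TJ$; and $\CS \yes \CJ$ holds because the adjacency relation of $\CS$ is strictly contained in that of $\CJ$. In each of these cases, any reconfiguration sequence for the left-hand rule is, verbatim, a reconfiguration sequence for the right-hand rule, since every intermediate configuration still has CC-multiset $\M$. The only nontrivial implications are therefore $\CS \yes \CSone$ and $\CS \yes \TJ$.

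For $\CS \yes \CSone$, the plan is to show that a single $\CS$-step can be expanded into a sequence of $\CSone$-steps. Fix $U \adj{\CS} U'$ with $C \in \C{U}$ replaced by $C' \in \C{U'}$; by definition $C \cup C'$ is connected in $G$. The key observation is that neither $C$ nor $C'$ has any $G$-neighbour in $U \setminus C = U' \setminus C'$, since each is a connected component of its respective subset. Consequently, for any connected $\tilde{C} \subseteq C \cup C'$ of size $|C|$, the set $(U \setminus C) \cup \tilde{C}$ is disjoint from and non-adjacent to $U \setminus C$, and hence has CC-multiset $\M$; so it is a valid intermediate configuration. It therefore suffices to prove the following subsidiary lemma: \emph{in a connected graph $H$, any two connected vertex subsets $S, S' \subseteq V(H)$ of equal size are linked by a sequence of connected subsets of that size in which consecutive terms differ by the swap of one vertex for a neighbour.} Applying this lemma inside $H = G[C \cup C']$ with $S = C$ and $S' = C'$ yields the desired $\CSone$-expansion.

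The subsidiary lemma I would prove by induction on a potential such as $\size{S \setminus S'}$, leveraging the fact that any connected graph on at least two vertices has at least two non-cut vertices. Given a current set $S_i \neq S'$, the idea is to first ensure that some $v \in S' \setminus S_i$ is $H$-adjacent to $S_i$; then $S_i \cup \set{v}$ is connected and has size $\size{S}+1$, so any spanning tree of $H[S_i \cup \set{v}]$ has at least two leaves, and I would argue that one can be chosen in $S_i \setminus S'$. Removing that leaf yields a connected $S_{i+1}$ of size $\size{S}$ with strictly smaller $\size{S_{i+1} \setminus S'}$. I expect the main obstacle to lie precisely in this choice of leaf: a careless selection may undo the progress just made, and it is conceivable that no leaf in $S_i \setminus S'$ is immediately available, in which case preparatory swaps inside $S_i$ (for example along a shortest $H$-path from $S_i$ to $S' \setminus S_i$) must be performed first. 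A careful case analysis, or reducing to a spanning tree of $H[S_i \cup S']$ and shifting $S_i$ along this tree one leaf at a time, should close the gap.

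Finally, $\CS \yes \TJ$ follows at once from $\CS \yes \CSone$: every $\CSone$-move removes exactly one vertex and inserts exactly one new vertex, which is precisely a $\TJ$-move, and every intermediate configuration in the constructed $\CSone$-sequence has CC-multiset $\M$ by construction, so it is also a valid \CCR-\TJ\ sequence.
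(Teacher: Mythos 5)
Your decomposition is exactly the paper's: the three containment implications are read off the definitions, $\CS \yes \TJ$ is reduced to $\CS \yes \CSone$, and $\CS \yes \CSone$ is reduced---via the correct observation that $C \cup C'$ touches no component of $\C{U} \cap \C{U'}$, so $(U \setminus C) \cup \tilde{C}$ is a valid configuration for every connected $\tilde{C} \subseteq C \cup C'$ with $\size{\tilde{C}} = \size{C}$---to the statement that in a connected graph any two equal-size connected vertex subsets are linked by single-vertex exchanges through connected subsets. The one substantive difference is that the paper does not prove this last statement: it invokes it as a known lemma of Elbassioni and Mehlhorn (\Cref{lem:one_cc}), whereas you attempt to prove it from scratch, and that attempt is where the genuine gap lies. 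Your induction on the potential $\size{S_i \setminus S'}$ does not close: when no vertex of $S' \setminus S_i$ is adjacent to $S_i$, you fall back on ``preparatory swaps'' along a path toward $S'$, but such swaps leave $\size{S_i \setminus S'}$ unchanged (and a careless leaf choice can even increase it), so the stated potential does not certify termination; you would need a finer measure (e.g., combining $\size{S_i \setminus S'}$ with a sum of distances to $S'$) or the explicit tree-shifting argument of the cited work. Two smaller points: the leaf you delete from a spanning tree of $H[S_i \cup \set{v}]$ need not lie in $S_i \setminus S'$ (both leaves may belong to $S_i \cap S'$), which is precisely the case your sketch defers; and for the \CSone\ condition you need $U_i \cup U_{i+1}$ connected at every step, which is automatic when $\size{C} \ge 2$ because consecutive sets share a vertex, but for $\size{C} = 1$ requires the exchanged vertices to be adjacent---the paper disposes of $\size{C} = 1$ as a separate trivial case before invoking the lemma. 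Everything outside the subsidiary lemma is correct and matches the paper; if you simply cite that lemma instead of reproving it, your argument is complete.
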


\section{\CCR-\CJ\ and \CCR-\CS\ for path graphs}\label{sec:path}
In this section, we show that for path graphs, \CCR-\CS\ and \CCR-\CJ\ are solvable in linear and quadratic time, respectively.
Let $v_1, \dots, v_n$ be the vertices of a path graph $G$ aligned from left to right.
For $U \subseteq V$, let $h_{U}$ be the sequence of sizes of connected components along $G$ from left to right.
Using $h_{U}$, we can determine the reconfigurability under \CS.
\begin{lemma}\label{lem:path_cs}
    $A$ and $B$ are reconfigurable under \CS\ if and only if $h_{A} = h_{B}$.
\end{lemma}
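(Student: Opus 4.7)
The plan is to prove both directions by exploiting the fact that on a path every connected subset is an interval. Writing the components of any configuration as $C_1, \dots, C_k$ from left to right, with $C_j$ occupying vertices $[L_j, R_j]$, the argument splits naturally.

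For the forward direction I would take any single \CS-move $U \adj{\CS} U'$, in which some component $C = C_j$ is replaced by $C'$ of the same size. Both $C$ and $C'$ are intervals of equal length, and the requirement that $C \cup C'$ be connected forces these intervals to overlap or be adjacent. The requirement that every other $C_i$ ($i \neq j$) survives as a component of $U'$ forces $C'$ not to touch any $C_i$; in particular $L' \ge R_{j-1}+2$ and $R' \le L_{j+1}-2$. Hence $C'$ remains in the same slot between $C_{j-1}$ and $C_{j+1}$, and the ordered size sequence $h_U$ is preserved. Iterating along a reconfiguration sequence gives $h_A = h_B$.

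For the converse I would assume $h_A = h_B$ and appeal to \Cref{thm:relation}(1) to reduce to exhibiting a \CSone-sequence. My encoding is the gap profile $(e_0, e_1, \dots, e_k)$, where $e_0$ counts empty vertices to the left of $C_1$, $e_i$ for $1 \le i \le k-1$ counts empty vertices between $C_i$ and $C_{i+1}$, and $e_k$ counts empty vertices to the right of $C_k$. Validity of a configuration is exactly $e_0, e_k \ge 0$ with $e_i \ge 1$ for internal $i$, and a \CSone-move, which on a path is simply a one-vertex shift of some $C_j$, corresponds to transferring a single unit between the adjacent gaps $e_{j-1}$ and $e_j$ (the requirement $C \cup C'$ connected is automatic for a unit shift). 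Since $h_A = h_B$, the two profiles share the total $\sum_i e_i = n - \sum_j s_j$.

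Substituting $f_i := e_i - 1$ for internal $i$ and $f_i := e_i$ at the endpoints turns every constraint into $f_i \ge 0$, so the remaining task is a standard shuffling argument on $k+1$ nonnegative buckets of fixed total sum. I would prove reachability by finding the smallest index $p$ where the current and target profiles disagree, locating some $q > p$ with a surplus (guaranteed by the equal-sum condition), and walking one unit from $q$ down through $q-1, q-2, \dots, p$ by successive adjacent transfers; each intermediate transfer is legal because the bucket in question momentarily holds at least $1$ after receiving. Iterating equalizes position $p$, and induction on $p$ finishes the construction. The part I expect to require the most care is bookkeeping at the boundaries $e_0, e_k$ (which tolerate $0$) versus internal $e_i$ (which must stay $\ge 1$), together with verifying that a \CSone-move on a path really does collapse to a unit shift of a component; once these routine checks are in place, the chip-shuffling core is mechanical.
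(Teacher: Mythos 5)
Your proof is correct, and it follows the same overall skeleton as the paper's (order of component sizes is a \CS-invariant for necessity; slide components into place for sufficiency), but both halves are executed differently. For necessity, the paper argues by contraposition that a size pair whose order is reversed ``cannot be swapped''; you instead prove directly that $h_U = h_{U'}$ for every single \CS-move, using the interval structure of components on a path ($C \cup C'$ connected keeps $C'$ in the slot of $C$, and non-touching of the surviving components pins it between $C_{j-1}$ and $C_{j+1}$) --- this is a more rigorous rendering of the same fact. For sufficiency, the paper reconfigures both $A$ and $B$ to the canonical left-most configuration $L$ and invokes symmetry and transitivity, leaving the existence of the $A \to L$ sequence implicit; your gap-profile argument ($e_0,\dots,e_k$ with $e_0,e_k \ge 0$ and internal $e_i \ge 1$, reduced via $f_i = e_i - 1$ to nonnegative buckets of fixed sum, then adjacent unit transfers realized as \CSone-shifts) proves reachability between any two configurations with equal $h$ directly and handles exactly the boundary bookkeeping the paper glosses over. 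The trade-off is that the canonical-form trick is shorter and immediately yields the $\order{n^2}$ length bound used in \Cref{thm:path_cs}, while your version is self-contained and constructive; the only cosmetic omission is that you spell out the chip-walk only for the case of a deficit at the first disagreeing index, but the surplus case is symmetric as you note.
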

\begin{proof}
    $\Leftarrow$) If $h_{A} = h_{B}$, both $A$ and $B$ can be reconfigurable into the \emph{left-most subset} $L$, where the vertices are kept to the left as much as possible.
    Since the reconfigurability is symmetric and transitive, it is possible to reconfigure from $A$ to $B$ by reconfiguring from $A$ to $L$ and from $L$ to $B$.

    $\Rightarrow$) We show the contraposition. When $h_{A}\neq h_{B}$, from $m(A) = m(B)$, there exists a size pair $x$ and $y$ $(x \neq y)$ whose positions are reversed at $h_{A}$ and $h_{B}$.
    Under \CS, these pairs cannot be swapped.
    This means that $A$ are $B$ not reconfigurable. \qed
\end{proof}

In the following, we show that not only the reconfigurability but also a reconfiguration sequence (if exists) can be computed in polynomial time.
To output a reconfiguration sequence efficiently, we use a \emph{compressed reconfiguration sequence}.
In the sequence, we identify the position of a connected component (path) by its leftmost vertex.
By outputting a pair of left-most positions of the connected component to be moved before and after each step, we can output a reconfiguration sequence efficiently.
\begin{theorem}\label{thm:path_cs}
    If $G$ is a path graph, \CCR-\CS\ is solvable in time $\order{n}$.
    If the answer is \YES, we can find a compressed reconfiguration sequence in length $\order{n^2}$ in time $\order{n^2}$.
\end{theorem}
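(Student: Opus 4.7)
The plan has two parts. For the $\order{n}$ decision, I would simply apply \Cref{lem:path_cs}: one left-to-right scan of $G$ extracts $h_A$ and $h_B$ as ordered sequences of component sizes, and a second scan tests equality. Both scans run in linear time in $n$.

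For the constructive part, I would route every reconfiguration through the leftmost subset $L$ from the $\Leftarrow$ direction of \Cref{lem:path_cs}. It is enough to describe a procedure that reconfigures any $U$ with $h_U = h_L$ into $L$ using $\order{n^2}$ compressed \CS\ steps; applying it to $A$ and to $B$, reversing the $B \reach{\CS} L$ half, and concatenating then produces a sequence from $A$ to $B$.

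The procedure processes the components $C_1, \dots, C_t$ of $U$ from left to right. Let $s_i = \size{C_i}$ and let $p_i = 1 + \sum_{j < i}(s_j + 1)$ be the desired leftmost position of the $i$th component in $L$, the $+1$ being the unused vertex that separates consecutive components. A short induction, using the analogous one-vertex gap forced on $U$ by the CC-multiset condition, shows that when the algorithm turns to $C_i$ its current leftmost position $q_i$ satisfies $q_i \ge p_i$. I then slide $C_i$ leftward in hops of length at most $s_i$: each hop moves it from $[r, r + s_i - 1]$ to $[\max(p_i, r - s_i),\, \max(p_i, r - s_i) + s_i - 1]$. The union of the two ranges is a subpath of $G$, so it is connected, and by the left-to-right order and the one-vertex gaps it touches neither the already-placed $C_{i-1}$ (ending at $v_{p_i - 2}$) nor any unmoved $C_j$ with $j > i$ (starting at position at least $q_i + s_i + 1$). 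Hence each hop is a valid \CS\ move. Component $C_i$ needs at most $\lceil (q_i - p_i)/s_i \rceil \le n$ hops; with at most $n$ components, the total number of hops, and therefore the length of the compressed sequence, is $\order{n^2}$. Each hop can be emitted as the pair of leftmost positions of $C_i$ before and after the move in $\order{1}$ time, so the construction also runs in $\order{n^2}$ time.

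The main subtlety is to verify that every intermediate configuration still has CC-multiset $\M$ and that every hop satisfies the \CS\ adjacency condition; both reduce to the elementary fact on a path that two disjoint subpaths form distinct connected components iff they are separated by at least one unused vertex. The $+1$ in the definition of $p_i$ together with the left-to-right processing order maintain this invariant throughout the sequence, so once these observations are set up the remaining correctness and complexity arguments are routine rather than the crux.
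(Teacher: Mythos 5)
Your proposal is correct and follows essentially the same route as the paper: decide by testing $h_A = h_B$ via \Cref{lem:path_cs} in linear time, and construct the sequence by routing $A$ and $B$ through the leftmost subset $L$, with each component moving $\order{n}$ times for an $\order{n^2}$ compressed sequence. Your left-to-right hopping procedure with the one-vertex-gap invariant simply makes explicit the details the paper leaves implicit in the $\Leftarrow$ direction of \Cref{lem:path_cs}.
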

\begin{proof}
    From \Cref{lem:path_cs}, output \YES\ if $h_{A} = h_{B}$, \NO\ otherwise. Since $h_{A}$ and $h_{B}$ can be computed in linear time, the algorithm runs in time $\order{n}$.
    If the answer is \YES, by reconfiguring $A$ to $L$ and then $L$ to $B$, we obtain a reconfiguration sequence.
    Since each connected component moves $\order{n}$ times, the length of the obtained reconfiguration sequence is $\order{n^2}$.
    \qed
\end{proof}
The length bound of a reconfiguration sequence in \Cref{thm:path_cs} is tight because there are instances for which any reconfiguration sequence requires length $\mathrm{\Omega}(n^2)$ in \ISR-\TS~\cite{demaine2015linear}.

Next, we consider \CJ.
In contrast to \CS, we can swap the positions of connected components when there is enough space.
For $U \subseteq V$ having $k$ connected components, we define the \emph{buffer} of $U$ in $G$ as $b(U, k) = n - |U| - k$.
Here, $|U| + k$ is the number of vertices occupied by $k$ connected components of $U$ when they are kept left.
The buffer means the number of the right vertices we can freely use when the vertices in $U$ are kept left.

Our algorithm for \CCR-\CJ\ imitates the \emph{bubble sort}.
However, our sequences may have duplicate elements.
Thus, we need some notations before stating the algorithm.
Let $h_A'$ be a sequence of $x_i$'s where $x$ is an element in $h_A$, and $i$ is the number of occurrences of $x$ in the prefix of $h_A$ until itself.
For instance, if $h_A = \seq{2, 2, 3, 3}$, then $h_A' = \seq{2_1, 2_2, 3_1, 3_2}$.
We refer to $x$ as an \emph{element} and $x_i$ as a \emph{subscripted element}.
In addition, we define $\sigma_A$ as a function from subscripted elements to integers.
Here, $\sigma_A(x_i)$ is the \emph{rank} of $x_i$ in $A$, that is, its index in $h_A$.
In the above example, $\sigma_A(3_1) = 3$ holds.
We similarly define $h_B'$ and $\sigma_B$.

The algorithm scans $h_A'$ from left to right, and if we find adjacent subscripted elements $x_i$ and $y_j$ such that their ranks are reversed in $h_B'$, then the algorithm tries to \emph{swap} the positions of $x_i$ and $y_j$.
However, since we can jump only one connected component in each step, we need a buffer with at least $\min\set{x, y}$.
Conversely, if this amount of buffer exists, we can swap $x_i$ and $y_j$ as follows:
Without loss of generality, we assume that $x < y$.
(Note that $x \neq y$ because, if so, $i < j$ holds and $\rank{A}{x_i} < \rank{A}{x_j}$ and $\rank{B}{x_i} < \rank{B}{x_j}$.)
First, we jump the connected component of size $x$ to the right buffer.
Second, we slide (actually jump) the connected component of size $y$ to the left.
Finally, we jump back the connected component of size $x$ from the right buffer to the left vacant space occupied by the connected component of size $y$.
We call this procedure the \emph{swap} of $x$ and $y$.

To establish the necessary and sufficient conditions for reconfigurability, we define an \emph{inversion} between $h_A'$ and $h_B'$ as a pair of subscripted elements $(x_i, y_j)$ such that the rank of $x_i$ is smaller than $y_j$ in $h_A'$, but the reverse holds for $h_B'$.
We denote the set of inversions by $\inv{h_A'}{h_B'}$, that is, $\inv{h_A'}{h_B'} = \set{(x_i, y_j) : \rank{A}{x_i} < \rank{A}{y_i}, \rank{B}{x_i} > \rank{B}{y_i}}$.
The next lemma provides a necessary and sufficient condition for reconfigurability under \CJ\ with respect to inversions.
\begin{lemma}\label{lem:buffer}
    $A$ and $B$ are reconfigurable under \CJ\ if and only if 
    \begin{equation}\label{eq:buffer}
        \max_{(x_i, y_j) \in \inv{h_A'}{h_B'}}
            \min \set{x, y}\ \le\ b(A, k). 
    \end{equation}
\end{lemma}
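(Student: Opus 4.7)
For sufficiency ($\Leftarrow$), I would build the reconfiguration in three stages. First, apply \Cref{lem:path_cs} to reconfigure $A$ to its left-packed form $L_A$ via \CS, since $h_A = h_{L_A}$. Second, transform $L_A$ into $L_B$ by a bubble-sort procedure: repeatedly find an adjacent pair $(x_i, y_j)$ in the current $h_U'$ whose order disagrees with $h_B'$, and perform the three-jump swap described just before the lemma---park the smaller component in the rightmost $\min\{x, y\}$ vertices of the path, move the larger into the vacated slot, and bring the smaller back into the vertices newly freed next to the larger. The standard bubble-sort invariant ensures that every swapped pair lies in $\inv{h_A'}{h_B'}$, so by hypothesis $\min\{x, y\} \le b(A, k)$, which is precisely the clearance required for the parking step. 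Third, use \Cref{lem:path_cs} once more to reconfigure $L_B$ to $B$ via \CS.

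For necessity ($\Rightarrow$), I would prove the contrapositive via a \emph{crossing lemma}: any single \CJ\ move that changes the relative order of two components $C$ (size $c$) and $D$ (size $d$) on the path forces $\min\{c, d\} \le b(A, k)$. Without loss of generality let $C$ be the mover, sitting left of $D$ initially and landing right of $D$. Partition the remaining $k - 2$ components into $k_L$ components left of $D$ with total size $s_L$ and $k_R$ components right of $D$ with total size $s_R$. Letting $p_D$ denote the leftmost vertex index of $D$, the pre-jump arrangement of $k_L + 1$ components on the left of $D$ (including $C$) with mandatory separators forces $p_D \ge s_L + c + k_L + 2$, while the post-jump arrangement of $k_R + 1$ components on the right of $D$ (again including $C$) forces $p_D \le n - d - c - s_R - k_R$. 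Combining these with $s_L + s_R = |A| - c - d$ and $k_L + k_R = k - 2$ yields $n \ge |A| + c + k$, i.e., $c \le b(A, k)$. The case in which $D$ is the mover is symmetric.

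Now suppose \eqref{eq:buffer} fails: there is an inversion $(x_i, y_j) \in \inv{h_A'}{h_B'}$ with both $x, y > b(A, k)$. Call a component \emph{heavy} if its size exceeds $b(A, k)$, and let $H(U)$ be the subsequence of $h_U'$ consisting of heavy entries. By the crossing lemma, no two heavy components can swap order, so $H(U)$ is invariant under \CJ\ moves, forcing $H(A) = H(B)$. But the inversion $(x_i, y_j)$ has two heavy entries, so it induces opposite orderings of $x_i$ and $y_j$ inside $H(A)$ and $H(B)$, contradicting $H(A) = H(B)$; hence $A$ and $B$ are not reconfigurable. The hardest part is the bookkeeping in the sufficiency direction---one must verify that each three-jump swap returns to the left-packed form and keeps components separated throughout---together with the packing-based counting in the crossing lemma, which must account for every mandatory inter-component separator.
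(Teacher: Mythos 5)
Your proof is correct. The sufficiency direction is essentially the paper's: left-pack via \Cref{lem:path_cs}, bubble-sort adjacent inverted pairs with the three-jump swap (park the smaller component in the right buffer, shift the larger, return the smaller), and finish with \Cref{lem:path_cs} again; the standard fact that bubble sort only ever swaps pairs that are inversions of the original sequence is exactly what the paper appeals to. Where you genuinely diverge is the necessity direction. The paper disposes of it in one sentence --- ``to swap their positions, we need a buffer with $\min\set{x,y}$'' --- whereas you prove a quantitative crossing lemma (a packing count over the $k_L+1$ components left of $D$ before the jump and the $k_R+1$ components right of it after, with every mandatory separator, yielding $n \ge |A| + k + c$ for a mover of size $c$ and hence $c \le b(A,k)$) and then extract an invariant: the left-to-right sequence of sizes of components exceeding $b(A,k)$ is preserved by every \CJ\ step, so an inversion with both entries heavy is a genuine obstruction. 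This buys rigor the paper omits: it explains why the relative order of two heavy components can never change under any sequence of moves (not merely why one designated ``swap'' is blocked), and it sidesteps the delicate issue of which component of $A$ ends up matched to which component of $B$, since the invariant is stated on size sequences rather than on a particular matching. Structurally the two arguments reach the same conclusion, so nothing is lost by your elaboration; it simply makes the only-if direction airtight.
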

\begin{proof}
    $\Rightarrow$) For each inversion $(x_i, y_j)$, to swap their positions, we need a buffer with $\min \set{x, y}$. Thus, the only-if direction holds.
    
    $\Leftarrow$) If Inequality~\eqref{eq:buffer} holds, we can \emph{sort $h_A'$ to $h_B'$} using the bubble sort.
    While there is an inversion, we repeat the following procedure:
    We scan $h_A'$ from left to right, and if we find adjacent subscripted elements $(x_i, y_j)$ such that they are an inversion, we swap them by using the right buffer.
    The correctness follows from the proof of bubble sort.
    Once $h_A'$ equals $h_B'$, by \Cref{lem:path_cs}, $A$ and $B$ are reconfigurable under \CS\ and thus under \CJ.
    \qed
\end{proof}

\begin{theorem}\label{thm:path_cj}
    If $G$ is a path graph, \CCR-\CJ\ is solvable in time $\order{n^2}$.
    If the answer is \YES, there is a compressed reconfiguration sequence of length $\order{n^2}$ and we can output the sequence in time $\order{n^2}$.
\end{theorem}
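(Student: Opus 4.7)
The plan is to use Lemma~\ref{lem:buffer} as the decision oracle and to build the reconfiguration sequence by a bubble sort on $h_A'$ (via the three-jump swap described just before Lemma~\ref{lem:buffer}), followed by the \CS\ phase from Theorem~\ref{thm:path_cs} that aligns the sorted configuration with $B$. Every \CS\ move is also a \CJ\ move by \Cref{def:rule}, so the concatenated sequence is a valid \CJ\ reconfiguration sequence.

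For the decision, I would first compute $m(A)$, $m(B)$, $h_A$, $h_B$, $h_A'$, $h_B'$, $\rank{A}{\cdot}$, $\rank{B}{\cdot}$, and $b(A,k)$ in $\order{n}$ by a single left-to-right scan, where $k = \size{\C{A}}$. If $m(A) \neq m(B)$, output \NO. Otherwise, enumerate every ordered pair of subscripted elements of $h_A'$ --- there are $\order{k^2} = \order{n^2}$ of them --- and, for each pair that is an inversion with respect to $\rank{B}{\cdot}$, update a running maximum of $\min\set{x, y}$. By Lemma~\ref{lem:buffer}, output \YES\ iff this maximum is at most $b(A,k)$. The total time is $\order{n^2}$.

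For the sequence, I would perform bubble sort on $h_A'$ as in the proof of Lemma~\ref{lem:buffer}: repeatedly scan left to right and, whenever an adjacent pair is inverted with respect to $\rank{B}{\cdot}$, execute the three-jump swap (jump the smaller of the two components into the right buffer, jump the larger into the vacated slot, jump the smaller one back). The number of swaps equals $\size{\inv{h_A'}{h_B'}} = \order{k^2} = \order{n^2}$, contributing $\order{n^2}$ compressed moves. After sorting, the resulting configuration $A''$ satisfies $h_{A''} = h_B$, so Theorem~\ref{thm:path_cs} yields a \CS, hence \CJ, reconfiguration from $A''$ to $B$ of compressed length $\order{n^2}$. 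Maintaining the left-most position of each component in an array allows every compressed move to be emitted in amortized $\order{1}$ time, so the whole construction runs in $\order{n^2}$.

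The main subtlety is ensuring that, at every swap, the buffer actually consists of $\min\set{x,y}$ contiguous empty vertices immediately to the right of the pair being swapped, rather than merely $b(A,k)$ empty vertices scattered across the path. I would address this by enforcing the invariant that, apart from the two components currently being swapped, all other components are packed as far left as possible, so the entire buffer $b(A,k)$ lies as one contiguous block on the right. An initial \CS\ compaction establishes this invariant in $\order{n^2}$ moves via Theorem~\ref{thm:path_cs}, and each three-jump swap operates only on a local window and restores the invariant upon completion. Under this invariant, Inequality~\eqref{eq:buffer} directly certifies that every swap is realizable, and no step inflates past the $\order{n^2}$ budget.
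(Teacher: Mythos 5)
Your proposal is correct and follows essentially the same route as the paper: decide via Lemma~\ref{lem:buffer} in $\order{n^2}$ time, then construct the sequence by an initial left-compaction, a bubble sort realized by the three-jump swaps, and a final \CS\ phase via \Cref{thm:path_cs}, all reported as a compressed sequence. Your explicit treatment of the left-packing invariant (ensuring the buffer is one contiguous block on the right) is a welcome elaboration of a point the paper leaves implicit.
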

\begin{proof}
    By \Cref{lem:buffer}, the answer is \YES\ if Inequality~\eqref{eq:buffer} holds and \NO\ otherwise.
    The algorithm first computes $h_A'$ and $h_B'$ in time $\order{n}$.
    Second, we compute inversions in time $\order{n^2}$.
    Then, Inequality~\eqref{eq:buffer} can be checked in the same time bound.
    To obtain a reconfiguration sequence, we first make $A$ and $B$ leftmost in $\order{n}$ steps.
    Then, we execute the bubble sort from $h_A'$ to $h_B'$ as shown in the proof of \Cref{lem:buffer}.
    Each swap consists of three steps; hence, the total number of steps is $\order{n^2}$.
    We obtain a reconfiguration sequence from $A$ to $B$ by concatenating the above sequences.
    We achieve the claimed time bound by using the compressed reconfiguration sequence described right before \Cref{thm:path_cs}.
    \qed
\end{proof}

\section{\CCR-\CS\ for cographs}\label{sec:cograph}
In this section, we show a linear-time algorithm for \CCR-\CS\ when $G$ is a cograph.
If the answer is \YES, we can also find a \emph{shortest} reconfiguration sequence of length $\order{|V|}$ under \CS\ or \CSone.
Note that reconfigurability is equivalent for \CS\ and \CSone, but the lengths of shortest reconfiguration sequences may differ.
Our algorithm is based on that for \ISR-\TS\ on a cograph by Kami{\'n}ski et al.~\cite{kaminski2012complexity}.
Note that \ISR-\TS\ is equivalent to \CCR-\CS\ when $\M$ only contains 1.

We prepare some notations.
The \emph{complement graph} $\compl{G}$ of a graph $G = (V, E)$ is defined as $\compl{G} = (V, \set{\edge{u, v} : u, v \in V, \edge{u, v} \notin E})$.
A graph $G$ is a cograph if and only if, for every induced subgraph $F$ of $G$ with at least two vertices, either $F$ or $\compl{F}$ is disconnected~\cite{corneil1981complement}.
A \emph{co-component} of a graph $G$ is the subgraph induced by the vertex set of a connected component in $\compl{G}$.
Note that, if $G$ is a connected cograph, $V(G)$ is partitioned into vertex sets each of which induces a co-component of $G$.

The algorithm solves the problem in divide-and-conquer manner using a \emph{cotree}~\cite{corneil1985linear}, a decomposition tree of a cograph with respect to taking components and co-components.
As one of the base cases, we use the next two lemmas.
In the following, for $X, Y \subseteq V$ and a reconfiguration rule \R, we use $\dist{\R}{X}{Y}$ as the shortest distance (length of a shortest reconfiguration sequence) between $X$ and $Y$ under $\R$.

\begin{lemma}\label{lem:cograph_CS}
    Let $G$ be a connected cograph and $X, Y$ be connected vertex subsets with $|X| = |Y|$. Then, the following holds.
    \begin{equation}
        \dist{\CS}{X}{Y} = 
        \left\{
            \begin{array}{ll}
            0 & \quad (X = Y) \\
            1 & \quad (X \neq Y \mbox{ and } X \mbox{ touches } Y) \\
            2 & \quad (X \mbox{ does not touch } Y)
            \end{array}
        \right.
    \end{equation}
\end{lemma}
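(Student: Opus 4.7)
The proof proceeds by case analysis matching the three cases in the statement. The first two cases are nearly immediate: $X=Y$ gives distance $0$; and when $X\ne Y$ and $X$ touches $Y$, a single $\CS$-move $X\adj{\CS}Y$ already works, because $X$ and $Y$ are each connected, so $\C{X}=\set{X}$ and $\C{Y}=\set{Y}$, which gives $\size{\C{X}\setminus\C{Y}}=\size{\C{Y}\setminus\C{X}}=1$, and $X\cup Y$ is connected by assumption. In the third case, any single $\CS$-move from $X$ to $Y$ would demand that $X\cup Y$ be connected, so the non-touching hypothesis forces $\dist{\CS}{X}{Y}\ge 2$ and only a matching upper bound remains.

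To obtain the upper bound $\dist{\CS}{X}{Y}\le 2$, I would invoke the standard characterization of a connected cograph on at least two vertices as the \emph{join} of its co-components $V_1,\dots,V_k$ with $k\ge 2$, so that any two vertices lying in distinct $V_a,V_b$ are adjacent in $G$. The key intermediate claim is that the non-touching hypothesis forces $X\cup Y\subseteq V_i$ for a single co-component $V_i$: if instead $X\cup Y$ met two different co-components $V_a$ and $V_b$, fix $x_a\in (X\cup Y)\cap V_a$ and $x_b\in (X\cup Y)\cap V_b$; every other $z\in X\cup Y$ sits in some $V_j$ and is therefore adjacent via a join edge to whichever of $x_a,x_b$ lies in a different co-component from $z$, making $X\cup Y$ connected and contradicting the hypothesis.

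Once $X\cup Y\subseteq V_i$, I would pick any $v\in V_j$ with $j\ne i$ (which exists because $k\ge 2$) and greedily extend $\set{v}$ to a connected set $Z$ of size $\size{X}$ by BFS in $G$; this is feasible because $\size{X}<n$ (otherwise $X=Y=V$ would place us in the first case). Since $Z$ contains a vertex outside $V_i$ while $X,Y\subseteq V_i$, the join edges immediately make both $X\cup Z$ and $Y\cup Z$ connected, and $Z\ne X,Y$ for the same reason, so $X\adj{\CS}Z\adj{\CS}Y$ is a valid reconfiguration sequence of length $2$. The main obstacle is the structural claim that $X\cup Y$ must live in a single co-component; once that is secured, constructing $Z$ and verifying its adjacencies are direct consequences of the join structure of connected cographs.
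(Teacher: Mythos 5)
Your proposal is correct, and its overall strategy matches the paper's: the first two cases are handled identically, and for the non-touching case both arguments exhibit an intermediate connected set $Z$ of size $\size{X}$ that touches both $X$ and $Y$, giving $X \adj{\CS} Z \adj{\CS} Y$. The difference lies in the structural fact used to produce $Z$: the paper invokes $P_4$-freeness directly to assert a single vertex $z$ adjacent to both $X$ and $Y$ (essentially the diameter-$2$ property of connected cographs) and grows $Z$ around $z$, whereas you use the join decomposition of a connected cograph into co-components, first showing that the non-touching hypothesis forces $X \cup Y$ into one co-component and then seeding $Z$ with a vertex of another co-component. Your route is slightly longer but buys explicit verification of details the paper leaves implicit --- that $Z$ of the required size exists, that $X \cup Z$ and $Y \cup Z$ are genuinely connected, and that $Z \neq X, Y$ so the two moves are legal --- and it reuses the same co-component machinery that drives the recursion in Theorem~\ref{thm:cograph}, so nothing further needs to be checked.
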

\begin{proof}
    The first case is obvious.  
    If $X \neq Y$ and $X$ touches $Y$, then $\dist{\CS}{X}{Y} \ge 1$.  
    Since $X \cup Y$ is connected, $X \adj{CS} Y$ and hence $\dist{\CS}{X}{Y} = 1$.  
    Otherwise, $X$ does not touch $Y$, so $\dist{\CS}{X}{Y} \ge 2$.  
    As $G$ is $P_4$-free, there exists a vertex $z$ adjacent to both $X$ and $Y$.  
    Let $Z$ be a connected vertex subset of size $|X|$ containing $z$.  
    Then $X \adj{CS} Z$ and $Z \adj{CS} Y$, so $\dist{\CS}{X}{Y} = 2$. \qed
\end{proof}

\begin{lemma}[*]\label{lem:cograph_CSone}
    Let $G$ be a connected cograph and $X, Y$ be connected vertex subsets with $|X| = |Y|$. If $X$ touches $Y$, $\dist{\CSone}{X}{Y} = \size{X \setminus Y}$.
    Otherwise, $\dist{\CSone}{X}{Y} = \size{X \setminus Y} + 1$.
\end{lemma}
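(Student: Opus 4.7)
I will establish matching lower and upper bounds on $\dist{\CSone}{X}{Y}$, handling the touching and non-touching cases separately.

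\emph{Lower bounds.} Each \CSone\ step replaces exactly one vertex, so $\size{X_t \setminus Y}$ decreases by at most one per step, giving $\dist{\CSone}{X}{Y} \ge \size{X \setminus Y}$ unconditionally. Suppose moreover that $X$ does not touch $Y$, yet a reconfiguration sequence $X_0 = X, \ldots, X_\ell = Y$ of length $\ell = \size{X \setminus Y}$ exists. Then every step must strictly drop the deficit, which forces $X_t \subseteq X \cup Y$ for all $t$. Consecutive subsets share $\size{X}-1$ vertices (or are linked by an edge when $\size{X}=1$), so $\bigcup_t X_t$ is connected; but this union equals $X \cup Y$, contradicting its disconnectedness. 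Hence $\dist{\CSone}{X}{Y} \ge \size{X \setminus Y} + 1$ in the non-touching case.

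\emph{Upper bound, touching case.} I argue by induction on $k = \size{X \setminus Y}$, the base $k = 0$ being trivial. For $k \ge 1$, I exhibit $X' = (X \setminus \{x\}) \cup \{y\}$ with $x \in X \setminus Y$ and $y \in Y \setminus X$ such that $X'$ is connected. Then $X \cup X' = X \cup \{y\}$ is connected via $y$'s neighbor in $X$, so $X \adj{\CSone} X'$; the set $X'$ still touches $Y$ through $y \in X' \cap Y$; and $\size{X' \setminus Y} = k-1$, enabling the induction. To find $x, y$, I use that $H = G[X \cup Y]$ is a connected cograph and admits a join decomposition $V(H) = S_1 \sqcup S_2$ (with all $S_1$-$S_2$ edges present), choosing how to group the co-components of $H$ into the two sides. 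Case analysis on how $X, Y$ intersect $S_1, S_2$ exhibits the swap: if $X \subseteq S_i$ then any $y \in S_{3-i} \subseteq Y \setminus X$ is universal to $X$, so the swap is immediate and $X'$ becomes a star through $y$; if $X$ straddles both sides, a swap within one side or across sides (depending on whether $\size{X \cap S_i} = \size{Y \cap S_i}$ for both $i$ or not) can be chosen so that $X'$ keeps nonempty intersection with both $S_1$ and $S_2$, hence is connected via the join edges.

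\emph{Upper bound, non-touching case.} Here $X \cap Y = \emptyset$. Since $G$ is a connected cograph, its diameter is at most $2$ (any shortest path of length $3$ would induce a $P_4$), so some vertex $z \notin X \cup Y$ is adjacent both to $X$ and to $Y$. The connected graph $G[X \cup \{z\}]$ has at least two non-cut vertices, at most one of which is $z$, so I pick a non-cut $x \in X$. Then $X \adj{\CSone} X' := (X \setminus \{x\}) \cup \{z\}$ with $X'$ connected, and $X'$ touches $Y$ because $z \in X'$ is adjacent to $Y$. Since $x, z \notin Y$, we have $\size{X' \setminus Y} = \size{X \setminus Y}$, so the touching case gives $\dist{\CSone}{X'}{Y} \le \size{X \setminus Y}$ and hence $\dist{\CSone}{X}{Y} \le \size{X \setminus Y} + 1$.

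\emph{Main obstacle.} The delicate part is the inductive step in the touching case: certifying that a valid single \CSone\ swap reducing $\size{X \setminus Y}$ by one always exists. The flexibility of regrouping the co-components of $H = G[X \cup Y]$ into the join sides $S_1, S_2$ is essential—if a naive swap would push $X'$ entirely into one side and disconnect it there, then reshuffling the partition produces a swap that keeps $X'$ straddling both sides, and the join edges then guarantee connectivity.
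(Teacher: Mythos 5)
Your overall architecture is sound, and your lower bounds are actually argued more carefully than in the paper (the paper inducts on $|X|$ with a case analysis on the co-components of $G$ itself and essentially asserts the $+1$ lower bound in the non-touching case; your ``every step must strictly drop the deficit, hence the union of all intermediate sets is a connected set equal to $X\cup Y$'' argument is a clean justification). The non-touching upper bound via a common neighbor $z$ and a non-cut vertex of $G[X\cup\{z\}]$ is also fine. The problem is exactly where you locate it: the inductive step of the touching case is not correctly resolved. Your dichotomy ``swap within one side if $|X\cap S_i|=|Y\cap S_i|$ for both $i$, otherwise swap across sides'' prescribes an invalid move in concrete instances, and the escape hatch of regrouping co-components is not always available. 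Take $H=G$ to be the join of a single vertex $x_1$ with the disjoint union of a triangle $\{c,y_1,y_2\}$ and an isolated vertex $d$; this is a connected cograph with exactly two co-components, $S_1=\{x_1\}$ and $S_2=\{c,d,y_1,y_2\}$, so no regrouping is possible. Let $X=\{x_1,c,d\}$ and $Y=\{c,y_1,y_2\}$; both are connected, $X$ touches $Y$, and $|X\setminus Y|=2$. Here $|X\cap S_1|=1\neq 0=|Y\cap S_1|$, so your rule demands a cross-side swap, which must remove $x_1$ and add some $y_i\in S_2$; but $(X\setminus\{x_1\})\cup\{y_i\}=\{c,d,y_i\}$ is disconnected ($d$ has no neighbour in $S_2$), contrary to your claim that the chosen swap ``keeps $X'$ straddling both sides.'' The lemma still holds (move $d$ to $y_1$ first, then $x_1$ to $y_2$), but that first move is a within-side swap in a situation your dichotomy routes to the cross-side case.

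The fix is not the bipartition-reshuffling you propose but a more careful choice of the removed vertex: when $|X\setminus Y|\ge 2$, one should remove an $x\in X\setminus Y$ such that $X\setminus\{x\}$ still meets at least two co-components of $H$ (possible unless $X\setminus Y$ is confined to a co-component met by $X$ in a single vertex, in which case one is back to $|X\setminus Y|=1$ or can swap within the large side), and handle the case $X$ contained in a single co-component separately via a universal vertex of $Y$ outside it. For comparison, the paper avoids your $G[X\cup Y]$-centric join decomposition altogether: it inducts on $|X|$ and splits on how $X$ and $Y$ sit relative to the co-components of the \emph{whole} graph $G$ (both spread over several co-components; in distinct co-components; in the same co-component and touching; in the same co-component and not touching, where one vertex is parked in another co-component at the cost of the extra step). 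That recursion pushes the delicate connectivity bookkeeping into smaller subproblems rather than resolving it with a single swap rule, which is what your sketch would need to make precise.
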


\begin{theorem}\label{thm:cograph}
    \CCR-\CS\ is solvable in time $\order{\size{V} + \size{E}}$ if the input graph $G = (V, E)$ is a cograph.
    If the answer is \YES, we can compute a shortest reconfiguration sequence under \CS\ or \CSone\ in time $\order{\size{V} + \size{E}}$ and its length is $\order{|V|}$.
\end{theorem}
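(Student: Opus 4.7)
The plan is to construct the cotree $T$ of $G$ in linear time using the algorithm of Corneil, Perl and Stewart, and then decide \CCR-\CS\ and produce a shortest reconfiguration sequence (under \CS\ or \CSone) by a postorder traversal of $T$, driving the work at each connected subtree with Lemmas~\ref{lem:cograph_CS} and~\ref{lem:cograph_CSone}.

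For a cotree node $t$ write $V_t$ for its vertex set, $G_t = G[V_t]$, and $A_t = A \cap V_t$, $B_t = B \cap V_t$. If $t$ is a union node, $G_t$ is the disjoint union of its children $G_{t_1}, \dots, G_{t_r}$, and there are no edges between distinct $V_{t_j}$ and $V_{t_{j'}}$ even in $G$; hence every component of $A_t$ lies inside a single $V_{t_j}$ and cannot be moved to another $V_{t_{j'}}$ under \CS. I output \NO\ unless $m(A \cap V_{t_j}) = m(B \cap V_{t_j})$ for every $j$, and otherwise recurse on each child independently. If $t$ is a join node $G_t = G_{t_1} \vee \cdots \vee G_{t_r}$, a short argument using the join edges shows that either (i) a single connected component of $A$ spans at least two co-components and is the unique component of $A_t$, or (ii) all components of $A_t$ share a single co-component $V_{t_j}$; the same dichotomy holds for $B_t$. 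When both sides are of type (i), Lemma~\ref{lem:cograph_CS} (respectively Lemma~\ref{lem:cograph_CSone}) yields a reconfiguration in at most two moves (respectively $|X\setminus Y|+1$ moves). When both are of type (ii) and share a co-component index I recurse inside that co-component; when the indices differ, I first exploit the join connectivity to transport the entire cluster across the join to the target co-component, again using the two lemmas, and then recurse.

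For the length, each connected component is moved a bounded number of times under \CS, giving $\order{|V|}$ in total; under \CSone\ the same bound follows by summing Lemma~\ref{lem:cograph_CSone} over matched component pairs, since the sets $C\setminus D$ of distinct matched pairs are pairwise disjoint subsets of $V$. For the running time, each vertex and each edge is charged once to the unique cotree node that is the least common ancestor of its endpoints, and the work at a cotree node is linear in what it is charged, giving $\order{|V|+|E|}$ overall after adding the $\order{|V|+|E|}$ preprocessing for the cotree.

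The main obstacle is handling how a connected component of $A$ in $G$ may fragment when restricted to $V_t$ (because edges leaving $V_t$ are cut), and how moves performed outside $V_t$ could in principle interfere with the moves planned inside. The modular character of $V_t$—its uniform external neighborhood—is what makes the recursion clean: the ``outside view'' of $V_t$ is constant during any sub-reconfiguration, so legality of a local move only needs to be checked against the local components. Getting the bookkeeping right in each combination of case (i)/(ii) for $A_t$ and $B_t$, and in the transport step when co-component indices differ, is where the argument requires the most care.
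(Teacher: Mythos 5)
Your overall framework (cotree recursion, union nodes handled by recursing on children with a per-child CC-multiset check, join nodes handled via the single-component lemmas) matches the paper's algorithm, but there is a genuine error in your join-node case analysis. When $A_t$ and $B_t$ each consist of \emph{two or more} connected components lying in different co-components $V_{t_j} \neq V_{t_{j'}}$, you propose to ``exploit the join connectivity to transport the entire cluster across the join to the target co-component.'' This transport is impossible, and the correct answer in that case is \NO: every vertex of one co-component is adjacent to every vertex of another, so the moment any component $C$ is moved to a set $C'$ containing a vertex outside $V_{t_j}$, that vertex becomes adjacent to all remaining components of $A_t$ inside $V_{t_j}$, merging them into a single connected component and violating the CC-multiset constraint. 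This is exactly the paper's justification for outputting \NO\ when $A$ and $B$ are not contained in the same co-component (and neither is a single component). A concrete counterexample to your step: $G = C_4$ (a cograph, the join of two independent sets of size two), $A$ and $B$ the two co-components, $\M = \ms{1, 1}$; your algorithm would answer \YES, but no \CS-move preserves the multiset, so the answer is \NO.

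The confusion stems from your dichotomy lumping ``a single component contained in one co-component'' together with ``several components contained in one co-component'' as type (ii). Cross-co-component transport is only possible in the former sub-case (where it is just the $\size{\C{A_t}} = \size{\C{B_t}} = 1$ base case of \Cref{lem:cograph_CS,lem:cograph_CSone}); in the latter it must be rejected. Once you split these sub-cases and return \NO\ when $\size{\C{A_t}} \ge 2$ and the co-components differ, your argument coincides with the paper's. Your length and running-time accounting are fine, and your worry about components fragmenting across cotree children does not actually arise: at union nodes no component crosses children, and at join nodes the recursion only descends when all of $A_t \cup B_t$ lies in a single co-component.
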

\begin{proof}
    Our algorithm is based on Kami{\'n}ski et al.~\cite{kaminski2012complexity}.
    The differences are:
    \begin{itemize}
        \item The first pruning condition $\size{A} \neq \size{B}$ is replaced by $m(A) \neq m(B)$.
        \item The base case $\size{A} = \size{B} = 1$ is replaced by $\size{\C{A}} = \size{\C{B}} = 1$. In addition, the procedure for this base case is substituted by Lemmas~\ref{lem:cograph_CS} or \ref{lem:cograph_CSone}.
    \end{itemize}
    Intuitively, these differences correspond to the notion of a token: a token changed from a vertex to a connected component.
    (We show pseudocode in \Cref{app:cograph} for reference, but the following proof is self-contained.)

    If $m(A) \neq m(B)$, the answer is \NO.
    If $\size{V(G)} = 1$, the problem is trivial.
    We assume that $G$ has at least two vertices in the following.
    Now, either $G$ or $\compl{G}$ is disconnected from the above characterization of cographs.
    If $G$ is disconnected, let $C_1, \dots, C_k$ be the connected components of $G$.
    We can solve the problem recursively for the connected components $C_1, \dots, C_k$ with respective vertex subsets $(A \cap C_1, B \cap C_1), \dots, (A \cap C_k, B \cap C_k)$.
    If one of the outputs is \NO, then the answer is NO.
    Otherwise, the answer is \YES\, and we merge subsequences to obtain a shortest reconfiguration sequence from $A$ to $B$.

    If $\compl{G}$ is disconnected, $G = (V, E)$ is connected.
    If in addition $\size{\C{A}} = \size{\C{B}} = 1$, by \Cref{lem:cograph_CS,lem:cograph_CSone}, the answer is \YES, and there exists a shortest reconfiguration sequence of $\order{|V|}$ from $A$ to $B$.
    Otherwise, if $A$ and $B$ are contained in the same co-component of $G$, we solve the problem for $A$ and $B$ recursively on that co-component.
    Otherwise, the answer is \NO\ because we cannot move any vertex in $A$ to the other co-component.
    Since there is an edge for every two vertices of different co-components, if we move a vertex outside the co-component, some connected components in $A$ will be merged.
    The correctness of the above procedure follows from \Cref{lem:cograph_CS,lem:cograph_CSone} and the characterizations of cographs.

    Given a cograph $G$, it is known that a \emph{cotree}, a decomposition tree with respect to taking components and co-components, can be computed in time $\order{\size{V} + \size{E}}$~\cite{corneil1985linear}.
    Using this, the above algorithm can be implemented in time $\order{\size{V} + \size{E}}$.
    The total length of the reconfiguration sequence is $\order{|V|}$ with respect to the input graph because we divide the problem into disjoint ones.\qed
\end{proof}

\section{\CCR-\CJ\ for connected components of equal size}\label{sec:chordal}
In this section, we consider \CCR-\CJ\ where $A$ and $B$ contain only connected components of equal size.
\ISR-\TJ\ is its special case where the size of each connected component is 1.
Our main tool is the \emph{CC-Piran graph}, a generalization of the Piran graph used by Kami{\'n}ski et~al.~\cite{kaminski2012complexity}.
They have shown that, given a graph $G$ and two independent sets $A$ and $B$, \ISR-\TJ\ is solvable in linear time if the Piran graph defined by $G, A, B$ is even-hole-free.
Analogously to their results, we show that given a graph $G$ and two vertex subsets $A, B$ whose all connected components have the same size, if the CC-Piran graph defined by $G, A, B$ is even-hole-free, \CCR-\CJ\ is solvable in linear time.

For two independent sets $A$ and $B$, the Piran graph $\piran{A}{B}$ of $A$ and $B$ is the subgraph of $G$ induced by the vertex set $(A \setminus B) \cup (B \setminus A)$.
We extend this definition to the CC-Piran graph.
\begin{definition}[CC-Piran graph]
    For a graph $G$ and $A, B \subseteq V(G)$, the CC-Piran graph $\ccpiran{A}{B} = (\vcc, \ecc)$ is defined as follows:
    \begin{itemize}
        \item $\vcc = (\C{A} \setminus \C{B}) \cup (\C{B} \setminus \C{A})$
        \item $\ecc = \set{\edge{C_A, C_B} : C_A \in \C{A} \setminus \C{B}, C_B \in \C{B} \setminus \C{A}, C_A \mbox{ touches } C_B}$
    \end{itemize}
\end{definition}

Note that the CC-Piran graph is equivalent to the Piran graph if $A$ and $B$ are independent sets.
In the following, to avoid confusion, we call a vertex in the CC-Piran graph a \emph{component}.

\begin{theorem}\label{thm:even}
    Let $G$ be a graph and $A$ and $B$ be vertex subsets that only contain connected components of the same size.
    If the CC-Piran graph $\ccpiran{A}{B}$ is even-hole-free, then $A$ and $B$ are reconfigurable under $\CJ$.
    Moreover, there exists an algorithm running in time $\order{\size{V} + \size{E}}$ (if the CC-Piran graph is even-hole-free) that finds a shortest reconfiguration sequence from $A$ to $B$.
\end{theorem}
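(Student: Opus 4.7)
The plan is to generalize Kami\'nski et al.'s forest-based algorithm for \ISR-\TJ\ to the equal-size \CCR-\CJ\ setting by executing one CC-Piran-shrinking move at a time. Observe first that every edge of $\ccpiran{A}{B}$ joins a component in $\C{A}\setminus\C{B}$ to one in $\C{B}\setminus\C{A}$, so $\ccpiran{A}{B}$ is bipartite and hence being even-hole-free is equivalent to being a forest. Moreover, $m(A)=m(B)$ together with the equal-size hypothesis yields $\size{\C{A}\setminus\C{B}} = \size{\C{B}\setminus\C{A}}$, so the bipartition is balanced.

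The structural heart of the argument is the following lemma I would prove: if $F = \ccpiran{A}{B}$ is a non-empty balanced bipartite forest, then some $C_B^{\star} \in \C{B}\setminus\C{A}$ has degree at most $1$ in $F$. The argument is a degree count per tree $T$ of $F$: an isolated $B$-side vertex already supplies $C_B^{\star}$, while a singleton $A$-side tree trivially contributes $\size{V_A(T)} > \size{V_B(T)}$. For a multi-vertex tree $T$ in which every $B$-side vertex has degree at least $2$, the bipartite degree sum gives $2\size{V_B(T)} \le \size{E(T)} = \size{V_A(T)} + \size{V_B(T)} - 1$, so $\size{V_A(T)} \ge \size{V_B(T)} + 1$. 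Summing over the trees of $F$ then contradicts balance.

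Given such a $C_B^{\star}$, I would let $C_A^{\star}$ be its unique $F$-neighbor when $\deg_F(C_B^{\star}) = 1$ and any element of $\C{A}\setminus\C{B}$ when $C_B^{\star}$ is isolated, and set $A' = (A\setminus C_A^{\star}) \cup C_B^{\star}$. The verification that $A \adj{\CJ} A'$ with $m(A') = \M$ reduces to showing that $C_B^{\star}$ shares no vertex with, and touches no component of, $A$ other than $C_A^{\star}$: components in $\C{A}\cap\C{B}$ are distinct components of $B$ from $C_B^{\star}$ and so are disjoint from it, while components in $\C{A}\setminus\C{B}$ other than $C_A^{\star}$ would appear as $F$-neighbors of $C_B^{\star}$, which by construction do not exist. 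This yields $\C{A'} = (\C{A}\setminus\set{C_A^{\star}}) \cup \set{C_B^{\star}}$ and $\ccpiran{A'}{B} = F - \set{C_A^{\star}, C_B^{\star}}$, so the forest-and-balance hypothesis is inherited by the new instance.

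Iterating $\size{\C{A}\setminus\C{B}}$ such moves terminates with $A = B$, and the length is optimal because any single \CJ-step alters $\size{\C{A}\triangle\C{B}}$ by at most $2$. For the running time I would compute $\C{A}$, $\C{B}$, and the adjacency lists of $F$ in $\order{\size{V}+\size{E}}$ using vertex-to-component labellings, then drive the main loop from a queue of $B$-side vertices of degree at most $1$, spending work proportional to the degrees touched at each deletion, for a total of $\order{\size{V}+\size{E}}$. I expect the main obstacle to be the counting lemma that guarantees a usable leaf on the $B$-side at every iteration: restricting to $A$-side leaves would not suffice, because moving a leaf $C_A \in \C{A}\setminus\C{B}$ toward its $F$-neighbor $C_B$ would merge $C_B$ with other components of $A$ whenever $\deg_F(C_B) \ge 2$.
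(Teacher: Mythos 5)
Your proposal is correct and follows essentially the same route as the paper: bipartiteness plus even-hole-freeness forces the CC-Piran graph to be a forest, balance of the two sides guarantees a $B$-side component of degree at most one, and the greedy jump of its unique neighbor (or an arbitrary component) is iterated to reach $B$ in the optimal number of steps. You merely spell out details the paper leaves implicit (the per-tree degree count, the verification that the chosen move is a valid \CJ\ step, and the degree-queue implementation), which is consistent with, not divergent from, the paper's argument.
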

\begin{proof}
    We briefly review the idea of Kami{\'n}ski et al.~\cite{kaminski2012complexity}.
    The Piran graph $\piran{A}{B}$ is bipartite, and as such it does not contain odd cycles. If, in addition, $\piran{A}{B}$ is also even-hole-free, the graph must be a forest.
    Since $\size{A \setminus B} = \size{B \setminus A}$, by analyzing the number of edges, there exists a vertex in $B \setminus A$ with at most one neighbor in $A \setminus B$.
    Using this property, a simple greedy algorithm works: Find a vertex $v$ from $B \setminus A$ with at most one neighbor in $A \setminus B$.
    If $v$ has a neighbor in $A \setminus B$, say $w$, jump $w$ to $v$.
    Otherwise, jump an arbitrary token $w$ from $A \setminus B$ to $v$.
    Replace $A$ and $B$ with $A \setminus \set{w}$ and $B \setminus \set{v}$, respectively. While $\size{A} \ge 1$, repeat the procedure.
    Since any reconfiguration sequence under \TJ\ requires $\size{A \setminus B}$ steps, the obtained sequence is shortest.
    
    We show that the same algorithm works for our case.
    Since $\ccpiran{A}{B}$ is bipartite and even-hole-free, the graph must be a forest.
    Since $\size{\C{A} \setminus \C{B}} = \size{\C{B} \setminus \C{A}}$, there exists a component in $\C{B} \setminus \C{A}$ with at most one neighbor in $\C{A} \setminus \C{B}$.
    Thus, the above algorithm also works for $\ccpiran{A}{B}$.
    The correctness is preserved because each token has the same size and any token of $\C{A}$ can be matched with any other of $\C{B}$.
    Note that a ``token'' is a component in our case while it is a vertex in the previous case.
    
    As for the time complexity, we show that given a graph $G$ and vertex subsets $A$ and $B$, the CC-Piran graph $\ccpiran{A}{B}$ can be constructed in linear time.
    Assuming that the graph is given with adjacency lists, we construct $\ccpiran{A}{B}$ in two scans of the lists.
    First, we compute the connected components of $A$ and $B$.
    Second, we compute the adjacency relation between components. \qed
\end{proof}
    
Note that \Cref{thm:even} does not say anything about the input graph class.
The result of Kami{\'n}ski et~al.~\cite{kaminski2012complexity} indicates that \ISR-\TJ\ is in P if $G$ is even-hole-free because even-hole-freeness is closed under taking induced subgraphs: when $G$ is even-hole-free, every induced subgraph of $G$ (and thus the Piran graph) is even-hole-free.
In contrast, the CC-Piran graph allows us to take a \emph{minor} of $G$, which leads to the contraction of edges.
Although even-hole-freeness is closed under taking induced subgraphs, it is not closed under taking minors because contracting an edge in an odd hole may create an even hole.

\begin{figure}[t]
    \centering
    \includegraphics[bb=0 0 228 139,width=0.5\linewidth]{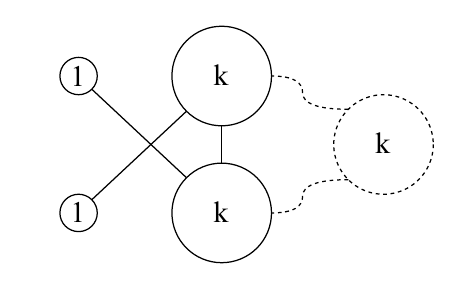}
    \caption{The CC-Piran graph such that the greedy algorithm fails.
    The solid upper and lower circles are components in $\C{A} \setminus \C{B}$ and $\C{B} \setminus \C{A}$, respectively.
    The dotted circle indicates a connected component with $k$ vertices in $G$ outside the CC-Piran graph.
    In this instance, we can reconfigure $A$ into $B$ by first moving the token of size $k$ to the right space, next moving the token of size 1, and then moving the token of size $k$ to the lower right space.
    }
    \label{fig:piran}
\end{figure}

In addition, as demonstrated in the proof of \Cref{thm:even}, the assumption that all connected components have the same size is important for the greedy algorithm works.
If there exist connected components with different sizes, there is a case we need a ``detour'', as illustrated in \Cref{fig:piran}.

Then, our next question is: For what graph $G$, the CC-Piran graph is even-hole-free?
The next lemma answers the question.
\begin{lemma}[*]\label{lem:chordal}
    If $G$ is chordal, the CC-Piran graph is even-hole-free.
\end{lemma}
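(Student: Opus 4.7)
My plan is to apply the subtree-intersection characterization of chordal graphs due to Gavril: $G$ is chordal iff $G$ is the intersection graph of a family of subtrees of some tree $T$. Fix such a tree $T$ and subtrees $\{T_v\}_{v\in V(G)}$ with $u\neq v$ satisfying $uv\in E(G)$ iff $T_u\cap T_v\neq\emptyset$. For any connected vertex subset $U\subseteq V(G)$, define $T_U := \bigcup_{v\in U} T_v$. A routine path-unfolding argument in $G[U]$ shows that $T_U$ is itself a connected subtree of $T$, since the subtrees corresponding to consecutive vertices on any path in $G[U]$ intersect.

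The core technical step is to recognize $\ccpiran{A}{B}$ as the intersection graph of the subtree family
$\mathcal{F} = \{T_X : X\in\C{A}\setminus\C{B}\}\cup\{T_Y : Y\in\C{B}\setminus\C{A}\}$.
I would verify that, for connected $U,U'\subseteq V(G)$, $T_U\cap T_{U'}\neq\emptyset$ iff $U$ touches $U'$: the intersection is nonempty iff some $u\in U$, $u'\in U'$ satisfy $T_u\cap T_{u'}\neq\emptyset$, iff $u=u'$ or $uu'\in E(G)$, iff $G[U\cup U']$ is connected (using that $U$ and $U'$ are themselves connected). Since two distinct components of $G[A]$ never touch, the subtrees $\{T_X : X\in\C{A}\}$ are pairwise disjoint in $T$, and likewise for $\{T_Y : Y\in\C{B}\}$. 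Consequently, the vertices and edges of the intersection graph of $\mathcal{F}$ match those of $\ccpiran{A}{B}$ exactly.

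By the easy direction of Gavril's theorem, the intersection graph of any family of subtrees of a tree is chordal, so $\ccpiran{A}{B}$ is chordal. Since it is also bipartite by construction, I would finish by observing that a bipartite chordal graph must be a forest: every cycle in a bipartite graph has even length at least $4$, and either forms a hole (contradicting chordality directly) or admits a chord that splits it into two strictly smaller even cycles, allowing an induction. A forest has no cycles at all, so $\ccpiran{A}{B}$ is in particular even-hole-free. The step that will require the most care is the touching/intersection equivalence, because $U$ and $U'$ may overlap (a vertex in $A\cap B$ can belong both to some $X\in\C{A}$ and some $Y\in\C{B}$), and so the argument must be formulated without assuming disjointness of $U$ and $U'$.
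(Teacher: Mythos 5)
Your proof is correct, but it takes a genuinely different route from the paper. The paper argues by contradiction directly inside $G$: assuming $\ccpiran{A}{B}$ contains an induced $C_{2k}$ with $k\ge 2$, it picks, for each component on that hole, a vertex of $G$ lying in that component and in no other, and then takes a shortest cycle of $G$ through these $2k$ vertices in cyclic order; this cycle is induced and has length at least four, contradicting chordality of $G$. You instead invoke Gavril's subtree-intersection characterization, observe that for connected $U$ the union $T_U=\bigcup_{v\in U}T_v$ is again a subtree, check that $T_U\cap T_{U'}\neq\emptyset$ iff $U$ touches $U'$ (correctly handling the non-disjoint case, which matters since components of $A$ and of $B$ may share vertices of $A\cap B$), and conclude that $\ccpiran{A}{B}$ is itself a chordal graph, hence, being bipartite, a forest. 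Both arguments are sound. Yours yields a strictly stronger structural conclusion (the CC-Piran graph is chordal, not merely even-hole-free, and in fact a forest --- which is exactly the property the greedy algorithm of \Cref{thm:even} exploits), and it makes transparent why taking components, i.e.\ contracting connected subsets, preserves the relevant structure for chordal graphs even though even-hole-freeness alone is not minor-closed. The paper's argument is more elementary in that it needs no representation theorem, only the definition of chordality, and it localizes the failure to an explicit long induced cycle in $G$. One point to spell out if you write yours up fully: that two distinct components of $\C{A}$ never touch follows from maximality of connected components, which is what guarantees the subtrees on each side of the bipartition are pairwise disjoint and hence that the intersection graph of your family has exactly the edge set $\ecc$.
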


We obtain the following corollary. Claimed time complexity is achieved by outputting a reconfiguration sequence using \TJ\ on the CC-Piran graph.

\begin{corollary}\label{cor:chordal}
    If $G$ is chordal and $A, B \subseteq V$ consists of only connected components of the same size, \CCR-\CJ\ is solvable in time $\order{\size{V} + \size{E}}$.
    In addition, we can find a shortest reconfiguration sequence in the same time bound.
\end{corollary}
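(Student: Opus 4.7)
The plan is to derive the corollary immediately from the two preceding results. First, I would invoke \Cref{lem:chordal}: since the input graph $G$ is chordal, the CC-Piran graph $\ccpiran{A}{B}$ is even-hole-free. Second, since the hypothesis of the corollary forces all connected components of $A$ and $B$ to have the same size, the premises of \Cref{thm:even} are all satisfied. Applying that theorem then gives both the reconfigurability of $A$ and $B$ under \CJ\ and an algorithm that produces a shortest reconfiguration sequence in time $\order{\size{V} + \size{E}}$, so the decision problem is also solved in linear time.

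The one technical point that still needs care is the representation of the output sequence. A naive format that writes every intermediate vertex subset explicitly can require $\mathrm{\Omega}(\size{V}^2)$ space and would defeat the linear-time claim, since a single \CJ\ step may move a component of size up to $\Theta(\size{V})$. My plan is to encode each step as a pair $(C, C')$ with $C \in \C{A} \setminus \C{B}$ being the component that jumps and $C' \in \C{B} \setminus \C{A}$ its target, exactly mirroring the \TJ\ sequence produced by the greedy algorithm of \Cref{thm:even} on the forest $\ccpiran{A}{B}$. The number of such pairs is at most $\size{\C{A} \setminus \C{B}} \le \size{V}$, so emitting the sequence fits within the time budget; this is precisely the ``\TJ\ on the CC-Piran graph'' representation alluded to in the hint accompanying the corollary.

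I expect the main obstacle to be bookkeeping rather than structural: I must check that constructing $\ccpiran{A}{B}$, running the leaf-picking greedy procedure on the resulting forest, and emitting the encoded steps can all be done in $\order{\size{V} + \size{E}}$ time. The linear-time construction of $\ccpiran{A}{B}$ is already established in the proof of \Cref{thm:even}, and greedy leaf removal on a forest is a standard linear-time routine (maintain a queue of degree-one components and update degrees as components are deleted). Chaining these pieces together with \Cref{lem:chordal} and \Cref{thm:even} yields the claimed linear-time bound both for deciding reconfigurability and for producing a shortest reconfiguration sequence, which is exactly the statement of the corollary.
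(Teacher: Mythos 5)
Your proposal is correct and follows the same route as the paper: combine \Cref{lem:chordal} (chordality implies the CC-Piran graph is even-hole-free) with \Cref{thm:even}, and output the sequence compactly as \TJ\ moves on the CC-Piran graph to stay within the linear time bound. The paper's own justification is exactly this one-line derivation, so no further comparison is needed.
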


\Cref{cor:chordal} exhibits an interesting contrast between \CJ\ and \CS: When $\M$ contains only 2, which corresponds to \textsc{Induced Matching Reconfiguration} (\IMR), is known to be PSPACE-complete under \TJ\ for chordal graphs~\cite{eto2022reconfiguration}.
\Cref{cor:chordal} shows that, however, \IMR\ is solvable in linear time for chordal graphs under \CJ.
This indicates that what we regard as a token largely affects computational complexity.

\section{Concluding remarks}\label{sec:conclusion}
Future work includes further analysis of the computational complexity of \CCR.
Some open problems are listed below:
\begin{itemize}
    \item Are \CCR-\CJ\ and \CCR-\CS\ tractable for trees?
    \item Is \CCR-\CJ\ on cographs in P? \ISR-\TJ\ is in P for cographs~\cite{bonamy2014reconfiguring,bonsma2016independent}.
    \item Is \CCR-\CJ\ tractable for even-hole-free graphs with general token sizes?
    \item Are \CCR-\CJ\ and \CCR-\CS\ in XP (i.e., solvable in time $n^{f(k)}$ for some computable function $f$) when parameterized by the number $k$ of connected components? Since \ISR-\TJ\ and \ISR-\TS\ are W[1]-hard when parameterized by $k$~\cite{ito2014parameterized,lokshtanov2018reconfiguration}, there unlikely exist FPT time (i.e., $f(k) n^{\order{1}}$) algorithms.
\end{itemize}

\subsubsection*{Acknowledgment}
This work is partially supported by JSPS KAKENHI JP22K17851, JP23K24806, and JP24K02931.

%
%

\bibliographystyle{splncs04}
\bibliography{reference}

\newpage

\appendix

\section{Omitted proof in \Cref{sec:def}}
In this section, we show the proof of \Cref{thm:relation}.
We use the following lemma~\cite{elbassioni2015polynomial}.

\begin{lemma}[\cite{elbassioni2015polynomial}]\label{lem:one_cc}
    Let $G$ be a connected graph and $X, Y \subseteq V$ be connected vertex subsets with $|X| = |Y|$.
    Then there exists a sequence $\seq{U_0 = X, U_1, \dots, U_\ell = Y}$ of connected vertex subsets such that, for all $i \in \set{0, 1, \dots, \ell - 1}$, $\size{U_i \setminus U_{i+1}} = \size{U_{i+1} \setminus U_i} = 1$.
\end{lemma}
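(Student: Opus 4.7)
The plan is to proceed by induction on a potential function combining the symmetric difference $|X\triangle Y|$ with the $G$-distance from $X$ to $Y\setminus X$, reducing until $X=Y$. The main workhorse is a one-vertex \emph{slide lemma}: if $W\subseteq V$ is connected and $v\in V\setminus W$ has a neighbor in $W$, then there exists $u\in W$ such that $(W\setminus\set{u})\cup\set{v}$ is connected. I would prove this by noting that when $|W|\ge 2$, $G[W]$ has at least two non-cut vertices (a standard fact for connected graphs); so, letting $w$ be any neighbor of $v$ in $W$, I can pick a non-cut vertex $u\ne w$. Then $W\setminus\set{u}$ remains connected, and $v$ attaches to it via $w$. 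The case $|W|=1$ is immediate.

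With the slide lemma in hand, the main proof goes as follows. Base case: if $X=Y$, take the trivial sequence. Otherwise, since $|X|=|Y|$ and $X\ne Y$, both $Y\setminus X$ and $X\setminus Y$ are nonempty. I would split into two cases depending on whether $Y\setminus X$ touches $X$ in $G$:

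Case (a): some $v\in Y\setminus X$ has a neighbor in $X$. The aim is to apply the slide lemma while additionally choosing $u\in X\setminus Y$, which makes the swap \emph{productive} in the sense that $|X\triangle Y|$ strictly drops by $2$. The selection of $u$ requires care: $u$ must be non-cut in $G[X]$, lie in $X\setminus Y$, and (if $v$ has a unique neighbor $w$ in $X$) satisfy $u\ne w$. I would establish this by a case analysis, using that $G[X]$ has at least two non-cut vertices, and falling back on a slightly different $v$ from $Y\setminus X$ when the only available non-cut choices lie in $X\cap Y$. Case (b): no $v\in Y\setminus X$ is adjacent to $X$. Then, using connectedness of $G$, pick a shortest path from $X$ to $Y\setminus X$ and let $v$ be its first vertex outside $X$; necessarily $v\in V\setminus(X\cup Y)$. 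Apply the slide lemma to swap in $v$. Here $|X\triangle Y|$ may not change, but $d_G(X,Y\setminus X)$ strictly decreases by $1$.

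Defining the potential $\Phi(X)=C\cdot d_G(X,Y\setminus X)+|X\triangle Y|$ for a suitably large constant $C$, each swap strictly decreases $\Phi$. Since $\Phi$ is a non-negative integer, this process terminates at $X=Y$, producing the required sequence. The main obstacle I expect is Case (a): showing that a non-cut vertex $u$ of $G[X]$ can \emph{always} be chosen inside $X\setminus Y$ (so that the step is productive rather than merely valid). When every non-cut vertex of $G[X]$ lies in $X\cap Y$, one must exploit the freedom to re-choose $v\in Y\setminus X$ or, failing that, relax to Case (b) for one step to re-shape $X$; verifying that this local adjustment never cycles and still drives $\Phi$ downward is the delicate part of the argument.
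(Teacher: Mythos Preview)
The paper does not prove this lemma at all; it is quoted from~\cite{elbassioni2015polynomial} and used as a black box in the proof of \Cref{thm:relation}. So there is no ``paper's proof'' to compare against, and any self-contained argument you give is already going beyond what the paper does.

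That said, your sketch has a real gap exactly where you flag it. In Case~(a) you need, for some $v\in Y\setminus X$, a vertex $u\in X\setminus Y$ with $(X\setminus\set{u})\cup\set{v}$ connected. Neither ``choose a non-cut vertex of $G[X]$'' nor ``try another $v$'' is guaranteed to land in $X\setminus Y$. Concretely, let $G=C_6$ with cyclic order $a\,b\,c\,d\,f\,e$, and take $X=\set{a,b,c,d}$, $Y=\set{a,d,e,f}$. Then $X\cap Y=\set{a,d}$ and $Y\setminus X=\set{e,f}$. For $v=e$, $G[X\cup\set{e}]$ is the path $e\,a\,b\,c\,d$, whose only non-cut vertices in $X$ are $a,d\in X\cap Y$; the same happens symmetrically for $v=f$. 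Hence no single slide from $X$ decreases $\size{X\triangle Y}$, so your potential $\Phi$ stalls: $d_G(X,Y\setminus X)=1$ throughout and $\size{X\triangle Y}$ does not drop on the first step. Your proposed fallback ``relax to Case~(b) for one step'' is unavailable here, since $X\cap Y\neq\emptyset$ forces some vertex of $Y\setminus X$ to be adjacent to $X$ (via connectedness of $Y$), so you never leave Case~(a). A correct slide sequence exists (e.g.\ $X\to\set{a,b,c,e}\to\set{a,b,e,f}\to Y$), but it requires an initial \emph{non}-productive step that your potential does not account for.

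To repair the argument you need a different invariant. One clean route is to first reduce to $X\cap Y\neq\emptyset$ (your Case~(b) handles this), and then work inside $H=G[X\cup Y]$ with a spanning tree $T$ of $H$ extending a spanning tree of $H[Y]$; rooting $T$ at a vertex of $Y$, every leaf of $T$ outside $Y$ lies in $X\setminus Y$, and one can always slide $X$ along $T$ toward $Y$, using the number of $T$-edges between $X$ and $Y$ (or the sum of $T$-distances from vertices of $X$ to $Y$) as the strictly decreasing potential. Alternatively, simply cite the result as the paper does.
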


Now we show the proof of \Cref{thm:relation}.

\begin{proof}
    In the following, let $U, U' \subseteq V$ with $m(U) = m(U')$.
    
    (1) $\CSone \yes \CS$ follows from the definition.
    We show $\CS \yes \CSone$.
    Let $\C{U} \setminus \C{U'} = \set{C}$ and $\C{U'} \setminus \C{U} = \set{C'}$.
    Note that $|C| = |C'|$ from $m(U) = m(U')$.
    If $|C| = |C'| = 1$, then $\size{C \setminus C'} = \size{C' \setminus C} = 1$, and thus $U \adj{\CSone} U'$.
    We assume $|C| = |C'| \ge 2$ in the following.
    Since $C$ and $C'$ do not touch the other components in $\C{U} \cap \C{U'}$, $C \cup C'$ does not touch any component in $\C{U} \cap \C{U'}$.
    Since $C \cup C'$ is connected, from \Cref{lem:one_cc}, there exists a sequence of vertex subsets $\seq{U_0 = C, U_1, \dots, U_\ell = C'}$ of $G[C \cup C']$ such that, for all $i \in \set{0, 1, \dots, \ell - 1}$, $\size{U_i \setminus U_{i+1}} = \size{U_{i+1} \setminus U_i} = 1$.
    For $i \in \set{0, 1, \dots, \ell}$, let $W_i = U_i \cup (U \cap U')$.
    Then, the sequence $\seq{W_0, W_1, \dots, W_\ell}$ is a reconfiguration sequence from $U$ to $U'$ under $\CSone$.
    This is because $W_0 = U, W_\ell = U'$, $m(W_i) = m(U)$ for all $i \in \set{0, 1, \dots, \ell}$, and $\size{U_i \setminus U_{i+1}} = \size{U_{i+1} \setminus U_i} = 1$ for all $i \in \set{0, 1, \dots, \ell - 1}$.

    (2) $\TS \yes \TJ$ and $\CS \yes \CJ$ are clear from the definition.
    We show $\CS \yes \TJ$.
    Since $\CS \yes \CSone$, it suffices to show $\CSone \yes \TJ$.
    This is true because, if $U \adj{\CSone} U'$, then $\size{U \setminus U'} = \size{U' \setminus U} = 1$ holds.

\end{proof}




\section{Omitted proof and pseudocode in \Cref{sec:cograph}}\label{app:cograph}

\subsection{Proof of \Cref{lem:cograph_CSone}}
\begin{proof}
    We use induction on $|X|\ (= |Y|)$.
    As base cases, we consider when $|X| = |Y| = 1$.
    If $X = Y$, then $\dist{\CSone}{X}{Y} = 0 = \size{X \setminus Y}$.
    Otherwise, let $X = \set{x}$ and $Y = \set{y}$.
    If $X$ touches $Y$, $\dist{\CSone}{X}{Y} = 1 = \size{X \setminus Y}$ because $x$ and $y$ are adjacent.
    If $X$ does not touch $Y$, the shortest distance between $x$ and $y$ in $G$ is 2 because $G$ is $P_4$-free.
    Therefore, $\dist{\CSone}{X}{Y} = 2 = \size{X \setminus Y} + 1$.

    For induction steps, we consider several cases.
    In the following, we assume that $|X| = |Y| \ge 2$.
    \emph{Case~A}: $X$ or $Y$ intersects with multiple co-components.
    Without loss of generality, we assume that $X$ intersects with multiple co-components.
    Then, $X$ touches $Y$, and thus there exists $x \in X$ such that $x$ touches $Y$.
    We move $x' \in X \setminus \set{x}$ to any $y \in Y \setminus X$.
    Now the instance $(X, Y)$ is reduced to $(X' = X \setminus \set{x'}, Y' = Y \setminus \set{y})$.
    The reduced case is in Case~A, and thus $\dist{\CSone}{X}{Y} \le 1 + \size{X' \setminus Y'} = |X \setminus Y|$.
    Since $\dist{\CSone}{X}{Y} \ge |X \setminus Y|$, we obtain $\dist{\CSone}{X}{Y} = |X \setminus Y|$.
    
    In the following cases, we assume that $X$ and $Y$ are contained in co-components $C_X$ and $C_Y$, respectively.
    \emph{Case~B}: $C_X \neq C_Y$.
    In this case, $X \cap Y = \emptyset$ and $X$ touches $Y$.
    Take arbitrarily $x \in X$ and $y \in Y$.
    Then, $(X \setminus \set{x}) \cup \set{y}$ is connected because every $x' \in X$ and $y' \in Y$ are adjacent.
    By moving $x$ to $y$, the instance $(X, Y)$ is reduced to $(X' = X \setminus \set{x}, Y' = Y \setminus \set{y})$.
    The reduced case is in Case~B because $X'$ and $Y'$ are contained in different co-components.
    Therefore, $\dist{\CSone}{X}{Y} \le 1 + \size{X' \setminus Y'} = |X \setminus Y|$.
    By $\dist{\CSone}{X}{Y} \ge |X \setminus Y|$, we obtain $\dist{\CSone}{X}{Y} = |X \setminus Y|$.

    \emph{Case~C-1}: $C_X = C_Y$ and $X$ touches $Y$.
    Since $X$ touches $Y$, we obtain $\dist{\CSone}{X}{Y} = |X \setminus Y|$ in the same way as Case~A.

    \emph{Case~C-2}: $C_X = C_Y$ and $X$ does not touch $Y$.
    In this case, there exists another co-component $C_Z$ because $|V(G)| \ge 2$.
    We move any $x \in X$ to any vertex $z \in C_Z$.
    Then, the problem is reduced to $(X'' = (X \setminus \set{x}) \cup \set{z}, Y)$.
    The reduced case is in Case~A because $X''$ intersects with multiple co-components.
    Thus, $\dist{\CSone}{X}{Y} \le 1 + \size{X'' \setminus Y} = |X \setminus Y| + 1$.
    Since $X$ does not touch $Y$, $\dist{\CSone}{X}{Y} > |X \setminus Y|$ holds.
    Therefore, we obtain $\dist{\CSone}{X}{Y} = |X \setminus Y| + 1$.
    \qed
\end{proof}

\subsection{Pseudocode}
We show pseudocode in \Cref{alg:cograph} and its subroutine in \Cref{alg:one_cc}.\footnote{We intentionally aligned the representation of \Cref{alg:cograph} with Kami{\'n}ski et al.~\cite[Algorithm~1]{kaminski2012complexity}. The readers are encouraged to compare ours with theirs.}
In \Cref{alg:cograph}, a \CS-path refers to a reconfiguration sequence under \CS.
\Cref{alg:one_cc} shows a subroutine for computing a shortest reconfiguration sequence under \CSone.
The subroutine for \CS\ can be similarly implemented.

\begin{algorithm}[hbt!]
\DontPrintSemicolon
\caption{\CCR-\CS\ in cographs}\label{alg:cograph}
\KwIn{A cograph $G$ and two vertex subsets $A, B$.}
\KwOut{A shortest sequence from $A$ to $B$ if they are reconfigurable, $\NO$ otherwise.}
\If{$m(A) \neq m(B)$}{\label{line:first}
    \Return{$\NO$}\;
}
\uElseIf{$\size{V(G)} = 1$}{
    \Return{the trivial \CS-path}\;
}
\uElseIf{$G$ is disconnected}{
    let $C_1, \dots, C_k$ be the connected components of $G$\;
    solve the problem recursively for the connected components $C_1, \dots, C_k$ with respective vertex subsets $(A \cap C_1, B \cap C_1), \dots, (A \cap C_k, B \cap C_k)$\;
    \uIf{one of the outputs is \NO}{
        \Return{\NO}\;
    }
    \Else{
        merge corresponding $(A \cap C_i, B \cap C_i)$ paths into an $(A, B)$ \CS-path $P$\;
        \Return{$P$}\;
    }
}
\uElseIf{$\size{\C{A}} = \size{\C{B}} = 1$}{\label{line:one_cc_start}
    \Return{$\textsc{CSOneComponent}(G, A, B)$}\;\label{line:subroutine}
}\label{line:one_cc_end}
\uElseIf{$A$ and $B$ are in the same co-component of $G$}{\label{line:same_cocomp_start}
    solve the problem for $A$ and $B$ recursively on that co-component and\;
    \Return{the output}\;
}\label{line:same_cocomp_end}
\Else{
    \Return{\NO}\;
}
\end{algorithm}

\begin{algorithm}[hbt!]
\DontPrintSemicolon
\caption{$\textsc{CSOneComponent}(G, A, B)$}\label{alg:one_cc}
\KwIn{A connected cograph $G\ (\size{V(G)} \ge 2)$ and two connected vertex subsets $A, B$ with $|A| = |B|$.}
\KwOut{A shortest sequence from $A$ to $B$ if they are reconfigurable, $\NO$ otherwise.}
let $\C{A} = \set{X}$ and $\C{B} = \set{Y}$\;
let $P$ be an empty sequence\;
\If{$X$ does not touch $Y$}{
    let $C_{XY}$ be the co-component containing $X$ and $Y$\;
    let $C_Z$ be another co-component\;
    Take arbitrarily $x \in C_{XY}$ and $z \in C_Z$\;
    Add ``$\mbox{move } x \mbox{ to } z$'' to $P$\;
    $X \gets (X \setminus \set{x}) \cup \set{z}$\;
}
let $C_X$ and $C_Y$ be the co-components containing $X$ and $Y$, respectively\;
let $x \in X$ be a vertex touching $Y$\;
Move vertices in $X \setminus (\set{x} \cup Y)$ to $Y \setminus X$, and finally move $x$ to $Y \setminus X$ (if $x \notin Y$)\;
\Return{$P$}\;
\end{algorithm}

\section{Omitted proof in \Cref{sec:chordal}}
We show the proof of \Cref{lem:chordal}.
\begin{proof}
    Let $G$ be a chordal graph, $A, B \subseteq V$ be two vertex subsets with the same CC-multiset.
    Assume that $\ccpiran{A}{B}$ contains an induced $C_{2k}$ for some $k \ge 2$.
    Let the components on $C_{2k}$ in $\ccpiran{A}{B}$ be $C_A^1, C_B^1, C_A^2, C_B^2, \dots, C_A^k, C_B^k$.
    For $i \in \set{1, \dots, k}$, $C_A^i$ touches $C_B^{i-1}$ and $C_B^i$, and does not touch the other components. (Here, we consider integers in 1 + mod $k$).
    Thus, for every $i \in \set{1, \dots, k}$, there is a vertex of $G$ that is contained in $C_A^i$ and not contained in the other components.
    The same holds for $C_B^i$'s.
    Let the vertices $v_A^1, v_B^1, v_A^2, v_B^2, \dots, v_A^k, v_B^k$, and $C^*$ be a shortest cycle passing through the vertices in this order.
    Since $C^*$ is shortest, it's an induced cycle in $G$.
    In addition, from $k \ge 2$, the length of $C^*$ is at least four.
    Therefore, $G$ contains an induced $C_{\ell}$ for some $\ell \ge 4$, which contradicts that $G$ is chordal.
    \qed
\end{proof}

\end{document}